\documentclass[11pt]{article}

\usepackage[margin=1in]{geometry}
\usepackage{amsmath,amssymb,amsthm,amsxtra}
\allowdisplaybreaks[2]
\usepackage{mathrsfs, mathtools}
\usepackage{bbm}
\usepackage{bm}
\usepackage[shortlabels]{enumitem}
\usepackage{microtype}
\microtypesetup{expansion=false}
\usepackage[T1]{fontenc}
\usepackage{xcolor}
\usepackage{comment}
\usepackage{float}
\usepackage{hyperref}
\usepackage{graphicx}
\usepackage{algorithm}
\usepackage[noend]{algpseudocode}
\makeatletter
\renewcommand{\theHALG@line}{\thealgorithm.\arabic{ALG@line}}
\makeatother

\hypersetup{hidelinks}
\numberwithin{equation}{section}

\theoremstyle{plain}
\newtheorem{theorem}{Theorem}[section]
\newtheorem{lemma}[theorem]{Lemma}

\theoremstyle{definition}
\newtheorem{definition}[theorem]{Definition}

\newtheorem{fact}[theorem]{Fact}

\theoremstyle{remark}

\newcommand{\R}{\mathbbm{R}}

\newcommand{\E}{\operatorname{\mathbbm{E}}}

\newcommand{\ip}[2]{\left\langle #1, #2\right\rangle}

\makeatletter
\DeclareRobustCommand\Equiv{\mathrel{%
 \mathchoice
 {\Equiv@\textfont\displaystyle{.45}}
 {\Equiv@\textfont\textstyle{.45}}
 {\Equiv@\scriptfont\scriptstyle{.5}}
 {\Equiv@\scriptscriptfont\scriptscriptstyle{.55}}
}}
\newcommand{\Equiv@}[3]{%
 \rlap{\raisebox{#3\fontdimen5#12}{$\m@th#2 = $}}%
 \raisebox{-#3\fontdimen5#12}{$\m@th#2 = $}%
}
\makeatother

\newcommand{\benm}{\begin{enumerate}[leftmargin=*]}
\newcommand{\eenm}{\end{enumerate}}

\definecolor{forestgreen}{rgb}{0.13, 0.55, 0.13}

\newcommand{\Id}{\mathrm{Id}}
\newcommand{\kmsp}{KMS\(^{\prime}\)\space}
\newcommand{\kms}{KMS\space}

\algnewcommand{\Input}{\item[\textbf{Input:}]}
\algnewcommand{\Output}{\item[\textbf{Output:}]}

\title{Improved SDP Coloring of 3-Colorable Graphs from Recursive Gaussian Certificates}

\author{
  Ijay Narang\thanks{Georgia Institute of Technology,  School of Computer Science. \texttt{inarang3@gatech.edu}.}
  \and
  Yukai Tang\thanks{Princeton University, Operations Research and Financial Engineering. \texttt{yt3846@princeton.edu}.}
}
\date{}

\begin{document}
\maketitle

\begin{abstract}
We give a randomized polynomial-time algorithm that, for every fixed $\varepsilon > 0$, colors every \(3\)-colorable \(n\)-vertex graph using
\(O\bigl(n^{(13-\sqrt{97})/18+\varepsilon}\bigr) \approx O\bigl(n^{0.17506+\varepsilon}\bigr)\) colors, improving upon the previous best bound of \(O(n^{0.19539})\) from Bansal, Huang, and Lee.

 Our improvement comes from analyzing higher-level neighborhoods through a recursive description of failure in Gaussian SDP rounding. If the rounding returns too small an independent set, it produces local Gaussian certificates at every vertex of a nonempty induced subgraph. We propagate these certificates along walks to higher-level neighborhoods by defining a recursive certificate structure and
 proving a strengthened cover-composition lemma, which refines the one of Arora, Chlamt{\'a}{\v{c}}, and Charikar. We then construct a bounded potential function that increases by a fixed positive amount at every propagation step, yielding a contradiction. Consequently, the rounding must produce a sufficiently large independent set.

\end{abstract}

\clearpage

\section{Introduction}
\label{sec:Intro}

In this paper, we study the task of coloring \(3\)-colorable graphs efficiently.
More precisely, given a graph promised to admit a \(3\)-coloring, we ask how few colors a polynomial-time algorithm can guarantee to use.  This problem is known to be hard: recovering a \(3\)-coloring from a $3$-colorable graph is NP-hard~\cite{GJS74}, and it remains $\mathrm{NP}$-hard even when the algorithm is allowed a fixed number of colors ~\cite{FeiMinzerWang2026,GS20,BKLM22}.

Thus, considerable attention has been devoted to designing polynomial-time algorithms that use as few colors as possible. Progress has come through two closely intertwined
lines of work: combinatorial and SDP-based. On the combinatorial side, Wigderson's
neighborhood argument gives a simple \(O(\sqrt n)\)-coloring
algorithm~\cite{Wig83}, and Berger and Rompel improved this guarantee to \(O(\sqrt{n/\log n})\)~\cite{BR90}. Blum subsequently introduced new combinatorial techniques based on higher-order neighborhood structure and obtained a \(\widetilde O(n^{3/8})\)-coloring algorithm~\cite{Blu94}, where $\widetilde O (\cdot)$ suppresses logarithmic factors.

On the SDP side, the first such result was by Karger,
Motwani, and Sudan, who combined Gaussian rounding of a vector-coloring SDP with Wigderson's combinatorial routine to obtain a \(\widetilde O(n^{1/4})\)-coloring~\cite{KMS98}. Blum and Karger then combined
the same SDP method with Blum's stronger dense-graph machinery, improving the
bound to $\widetilde O(n^{3/14}) =\widetilde O(n^{0.2143})$
~\cite{BK97}. Further advances in SDP rounding led Arora,
Chlamt{\'a}{\v{c}}, and Charikar to obtain a
\(\widetilde O(n^{0.2111})\)-coloring~\cite{ACC06}, and Chlamt{\'a}{\v{c}} to improve
this to \(\widetilde O(n^{0.2072})\)~\cite{Chl07}.

Continued progress has come from both combinatorial and SDP-based methods. Kawarabayashi and Thorup introduced the first substantial new dense-graph techniques since Blum, obtaining successively \(\widetilde O(n^{0.2049})\) and
\(\widetilde O(n^{0.19996})\) colors~\cite{KT12,KT17}.
Together with Yoneda, they later improved the bound to
\(\widetilde O(n^{0.19747})\)~\cite{KTY24}.
Most recently, Bansal, Huang, and Lee shifted the frontier back to the SDP side. Through a refined analysis of second- and third-order neighborhoods, they obtained an \(O(n^{0.19539})\)-coloring algorithm~\cite{BHL26}, which
is the best previously known guarantee.

Our main result is an improved polynomial-time coloring guarantee for
\(3\)-colorable graphs.

\begin{theorem}
\label{thm:main}
For every fixed \(\varepsilon>0\), there is a randomized polynomial-time
algorithm that, given a \(3\)-colorable graph on \(n\) vertices, with high probability outputs a
proper coloring using
\(O\bigl(n^{(13-\sqrt{97})/18+\varepsilon}\bigr) \approx O\bigl(n^{0.17506+\varepsilon}\bigr)\) colors.
\end{theorem}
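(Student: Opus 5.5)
The plan follows the standard reduction scheme of \cite{KMS98,BK97,ACC06,BHL26}. Fix a target exponent \(\gamma=(13-\sqrt{97})/18+\varepsilon\). It suffices to give a randomized polynomial-time procedure that, on any induced subgraph \(H\) of the input graph with \(m\) vertices, either finds an independent set of size \(\widetilde\Omega(m/n^{\gamma})\) or makes combinatorial progress. Repeatedly extracting and color-classing such sets then yields \(\widetilde O(n^{\gamma})\) colors, and the logarithmic factors are absorbed into \(n^{\varepsilon}\).

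The combinatorial progress cases are the classical ones, and we handle them first. If some vertex has degree above roughly \(n^{1-2\gamma}\)-type thresholds, or a high-degree structure exists, we apply Wigderson- or Blum--Kawarabayashi--Thorup-style dense-case routines. These either 2-color a large neighborhood or merge/contract vertices. Their effect is to let us assume the graph is roughly regular of degree \(\Delta=n^{\delta}\), with the neighborhood-expansion conditions on second- and higher-order neighborhoods that \cite{BHL26} exploits. In this regime we solve the vector 3-coloring (strict, with \(\langle v_i,v_j\rangle=-1/2\) on edges) SDP and perform Gaussian threshold rounding. We take \(S=\{i:\langle g,v_i\rangle\ge t\}\), remove conflicting edges, and choose \(t\) to balance \(\Pr[i\in S]\approx e^{-t^2/2}\) against the expected number of bad edges.

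The core step is to show that this rounding cannot fail. Suppose the independent set obtained is too small for every threshold in the relevant range. Then, by pruning vertices whose local rounding behaves badly, we obtain a nonempty induced subgraph in which every vertex carries a \emph{local Gaussian certificate}. Concretely, the vectors of its neighbors, projected orthogonally to \(v_i\), have Gaussian measure of a cap that is large relative to what the degree bound allows. We then define level-\(k\) certificates recursively along walks \(i_0\to i_1\to\cdots\to i_k\). They are built by composing covers in the sense of Arora--Chlamt\'a\v{c}--Charikar, and we prove a strengthened cover-composition lemma: composing a cover of one level with the local certificate at the next vertex gives a cover whose Gaussian parameter improves additively by a fixed amount \(\eta(\delta,\gamma)>0\). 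Finally, we introduce a potential \(\Phi_k\) that measures the certificate strength at level \(k\), for instance \(\log\) of the Gaussian measure of the covered region normalized by neighborhood size. This potential is bounded above, by \(\log n\) or by a cap of measure at most \(1\), yet it increases by \(\eta\) at every step. Since the certificates exist at every vertex, the walks can be taken arbitrarily long, and we get a contradiction after \(O(1/\eta)\) steps. Hence the rounding succeeds.

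The main obstacle is the strengthened cover-composition lemma and the resulting optimization. We must verify that the increment \(\eta\) is strictly positive exactly when \(\gamma>(13-\sqrt{97})/18\). That is, we need the tradeoff between the combinatorial threshold (degree \(\Delta\)) and the Gaussian tail exponent to produce a quadratic \(9\gamma^2-13\gamma+2=0\), whose smaller root is \((13-\sqrt{97})/18\). The first thing to try is to write the composition as a Gaussian-correlation inequality for caps at correlation \(-1/2\), and then optimize the threshold pair to recover this quadratic. The \(\varepsilon\) slack handles the gap between asymptotic Gaussian tail estimates and finite \(n\).
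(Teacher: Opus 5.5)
\textbf{Genuine gap in the sparse case.} Your overall architecture matches the paper's. It has Blum-style progress and a sparse/dense split at \(\Delta(G)\le n^{1-2\gamma}\), which is exactly \(\beta=1-2\alpha\). Failure of KMS\(^{\prime}\) yields local covers, these are propagated along walks, and a bounded potential gives the contradiction. On the dense side the paper simply invokes Theorem~\ref{thm:dense-progress} and Theorem~\ref{thm:progress-combination}, with no regularity or expansion hypotheses.

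\textbf{Your composition claim is false.} You say composing covers raises the Gaussian threshold by a fixed \(\eta>0\) at each step. The quantity that must eventually beat the union bound \(n\Phi^c(\lambda t)\) is the threshold of the \emph{normalized} directions \((v_k-\langle v_i,v_k\rangle v_i)/\sqrt{1-\langle v_i,v_k\rangle^2}\). The root correlation drifts along walks, \(a\mapsto B(a,y)=-\frac a2+\frac{\sqrt3}{2}\sqrt{1-a^2}\,y\). When \(|a|\) shrinks, the renormalization can cancel the \(\frac32\) gained from the new edge. Concretely, for \(c=0.2\), \(|a|=L\), \(y=0\), \(\lambda=2\), the transition in Lemma~\ref{lem:recursive-step} gives \(\lambda_{\mathrm{new}}\approx1.94<\lambda\), even before the \(\tau\) loss.

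\textbf{The level-\(k\) state is undefined, and your potential cannot fix this.} Different walk endpoints have different root correlations, so there is no single level-\(k\) threshold to track. The paper first buckets walk types \(y_{jk}\) into intervals of width \((\log t)^{-2}\), at cost only \(\exp(-o(t^2))\), so each certificate carries one common correlation \(a\). It then uses \(V(a,\lambda)=\lambda-\frac{3(1+c)}{1+4c}\bigl((1-a^2)^{-1/2}-1\bigr)\), whose correction term pays for the normalization change. Your candidate potential, a log of a normalized Gaussian measure, ignores \(a\). Its scale also clashes with your step count. A range of order \(\log n\asymp t^2\) with constant increments needs \(\Theta(t^2)\) steps, not \(O(1/\eta)\). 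Over that many steps the per-step losses (factors \(\rho_t\) in probability, \(\tau t\) in threshold) destroy the certificate. In the paper the probability stays \(\exp(-o(t^2))\) throughout, and all the information sits in \(\lambda\) at scale \(t\), so a fixed number of steps suffices.

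\textbf{The exponent is not derived.} You read \(9\gamma^2-13\gamma+2=0\) off the target. A correlation inequality for caps at correlation \(-\frac12\) sees only a single edge and cannot produce this constant. In the paper the constant comes from two ingredients:
\begin{enumerate}[leftmargin=*]
\item Lemma~\ref{lem:orthogonal-cover-composition} applies Ehrhard's inequality to the lower tail of the inner maximum \(M_{Y_x}\). It uses the cardinality bounds \(\log|Y_x|\le\frac{3(1+c)}{2}t^2\) and \(\log|X|\le\frac{5+3c}{2}t^2\), and its composition loss is only \(\sqrt{(5+3c-\lambda^2)(\frac{c}{1+c}-y^2)}\,t\).
\item Cauchy--Schwarz reduces a positive increment of \(V\) to \(H\ge\sqrt{(5+3c)W}\). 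The best constant in \(H\ge\kappa\sqrt{W}\) is \(\kappa=3\sqrt{(1+c)/(1+4c)}\), attained exactly on the two-cycle \(a\mapsto-a\), where \(H=\frac32\) and \(W=\frac{1+4c}{4(1+c)}\).
\end{enumerate}
One therefore needs \(9(1+c)>(5+3c)(1+4c)\), i.e.\ \(c<c^\dagger=(\sqrt{97}-7)/12\), which gives \(1/(5+3c^\dagger)=(13-\sqrt{97})/18\). Without the common-correlation bookkeeping, the cardinality-aware composition lemma, and the correlation-corrected potential, your outline does not establish the stated bound.
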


Our work continues progress on the SDP side, and at a high level can be viewed as a direct extension of the work of \cite{ACC06, Chl07, BHL26}. These works enable more aggressive rounding schemes by examining the level $2$ and level $3$ neighborhoods. We build on this by analyzing arbitrarily high-level neighborhoods through a recursive mechanism. Before detailing our approach, we review the vector coloring framework and the rounding algorithms on which our work builds.

\paragraph{Notation.} For notation, given a graph $G = (V,E)$ and a subset of vertices $S\subseteq V$, define $G[S]$ to be the subgraph induced by $S$. Given a subgraph $G' \subseteq G$, let $N_{G'}(v)$ be the set of vertices neighboring $v$ in $G'$ and $N_{G'}^{(\ell)}(v)$ the set of endpoints of length-$\ell$ walks beginning at $v$ in $G'$. Walks may repeat vertices. We denote the maximum and minimum degrees of $G$ by $\Delta(G)$ and $\delta(G)$, respectively.

We denote the identity matrix by $\Id$ and the multivariate Gaussian with mean $\mu$ and covariance $\Sigma$ by $\mathcal N (\mu, \Sigma)$. Additionally, let \(\Phi\) and \(\phi\) denote the CDF and PDF of
a standard Gaussian random variable. We write $\Phi^c$ to be the complement of the CDF and $\Phi^c(s):=1-\Phi(s)=\Phi(-s)$. We define $\Phi^{-1}$ to be the inverse Gaussian CDF (quantile function), with $\Phi^{-1}(0)=-\infty$ and $\Phi^{-1}(1)=+\infty$. All norms are Euclidean, and all logarithms are natural.

For a positive integer \(m\), write \([m]:=\{1,\ldots,m\}\).
We use \(\langle\cdot,\cdot\rangle\) for the Euclidean inner product and
\(x^\perp:=\{z:\langle x,z\rangle=0\}\). For \(r\in\mathbb R\), let
\(r_+:=\max\{r,0\}\). 

\subsection{The Vector \texorpdfstring{$3$}{3}-Coloring Rounding Framework} \label{subsec:rel_work}
A vector $3$-coloring of a graph $G = (V,E)$ is defined as an assignment of unit vectors $\{v_i\}_{i \in V}$ to vertices satisfying $\langle v_i,v_j\rangle \le-\frac12$ for every $\{i,j\}\in E.$ We say $\{v_i\}_{i \in V}$ is a \emph{strict} vector $3$-coloring of $G$ if $\langle v_i,v_j\rangle=-\frac12$ for every $\{i,j\}\in E.$ Vector $3$-colorings can be computed to arbitrary accuracy in polynomial time by semidefinite programming. For simplicity, and as in \cite{ACC06,Chl07,BHL26}, we write the analysis for strict vector $3$-colorings.

At a high level, obtaining a coloring with $\widetilde{O}(s(n))$ colors reduces to finding independent sets of size $\Omega(n/s(n))$ in the relevant induced subgraphs. We formalize this reduction using Blum's notion of progress in Appendix~\ref{subsec:dense-sparse-balance}. Our goal is to produce a large independent set from a strict vector $3$-coloring. We first review the KMS algorithm~\cite{KMS98}.

\begin{algorithm}[H]

\caption{\kms Gaussian rounding}
\label{alg:kms}

\begin{algorithmic}[1]

\Input A graph \(G=(V,E)\), a strict vector \(3\)-coloring
\(\{v_i\}_{i\in V}\), and a threshold \(t>0\).

\Output An independent set \(I\subseteq V\).

\State Sample a standard Gaussian vector
\(\gamma\sim\mathcal{N}(0,\mathrm{Id})\).

\State Form the set \(S=\{i\in V:\langle\gamma,v_i\rangle\geq t\}\).

\State Let \(I=\{i\in S:N_G(i)\cap S=\varnothing\}\) be the set of isolated vertices in the induced subgraph \(G[S]\).

\State \Return \(I\).

\end{algorithmic}

\end{algorithm}

The analysis of the \kms algorithm proceeds as follows. The algorithm chooses the threshold \(t\) such that
\(\Phi^c(t) \approx \Delta^{-1/3}\). Consequently,
\(\E[|S|]\approx n\Delta^{-1/3}.\)
Moreover, with this choice of \(t\), the conditional probability that a fixed neighbor of a selected vertex also lies in \(S\) is small enough that a union bound over its neighborhood shows that a constant fraction of the vertices selected in \(S\) are isolated in the induced subgraph \(G[S]\). This yields an independent set of size
$\widetilde\Omega(n\Delta^{-1/3})$.

We next describe a variant of KMS algorithm formulated in \cite{ACC06}.

\begin{algorithm}[H]
\caption{\kmsp Gaussian rounding}
\label{alg:kms-prime}
\begin{algorithmic}[1]
\Input A graph \(G=(V,E)\), a strict vector \(3\)-coloring
\(\{v_i\}_{i\in V}\), and a threshold \(t>0\).
\Output An independent set \(I\subseteq V\).

\State Sample a standard Gaussian vector $\gamma \sim \mathcal{N}(0,\mathrm{Id})$.

\State Form the set \(S=\{i\in V:\langle\gamma,v_i\rangle\geq t\}\).

\State Compute an arbitrary maximal matching \(M\) in the induced subgraph
\(G[S]\).

\State \Return \(I=S\setminus V(M)\).
\end{algorithmic}
\end{algorithm}

The analysis of \kmsp~\cite{ACC06} improves on the original KMS rounding procedure by lowering the threshold \(t\) so that $\Phi^c(t) \approx \Delta^\frac{-1}{3(1+c)}$, thereby increasing the expected size of the random set \(S\) to $n \Delta^\frac{-1}{3(1+c)}$. If a substantial fraction of the vertices in \(S\) survive the maximal-matching cleanup, then \kmsp returns a larger independent set.

We now analyze what happens when most of the vertices selected in $S$ are removed. Since $\{v_i\}$ is a strict vector \(3\)-coloring, every edge \(\{i,j\}\in E\) admits the decomposition
\begin{align}
v_j&=-\frac12v_i+\frac{\sqrt3}{2}u_{j\mid i},
\qquad \|u_{j\mid i}\|=1,
\qquad \langle u_{j\mid i},v_i\rangle=0.
\label{eq:edge-decomposition}
\end{align}
The decomposition from the other endpoint $u_{i \mid j}$ is analogously defined.
Consequently,
\begin{align}
&\Pr\left[
    \exists j\in N_G(i):\langle\gamma,v_j\rangle\geq t
    \,\middle|\,\langle\gamma,v_i\rangle\geq t
\right]\notag\\
&\qquad\leq\Pr\left[
    \exists j\in N_G(i):\langle\gamma,u_{j\mid i}\rangle\geq\sqrt3\,t
    \,\middle|\,\langle\gamma,v_i\rangle\geq t
\right]\notag\\
&\qquad=\Pr\left[
    \exists j\in N_G(i):\langle\gamma,u_{j\mid i}\rangle\geq\sqrt3\,t
\right],
\end{align}
where the final equality follows from \(u_{j\mid i}\perp v_i\). Thus, if a vertex $i$ is removed by the matching with constant probability conditional on $i\in S$, then there is a family of edge directions $\{u_{j\mid i}:j\in N_G(i)\}$ whose maximum projection on $\gamma$ exceeds \(\sqrt3\,t\) with constant probability. This observation motivates the following definition.

\begin{definition}[$(s,\delta)$-cover]
A set of vectors $X = \{x_1,\ldots,x_k\} \subseteq \R^d$ is called an $(s,\delta)$-cover if for
$\gamma \sim \mathcal N({0, \Id_d})$, we have
\(\Pr\left[\exists i : \langle \gamma, x_i\rangle \ge s\right] \ge \delta.\)
\end{definition}

The analysis of \cite{ACC06} then proceeds by showing that the failure of \kmsp rounding forms a subgraph of ``bad'' vertices where the neighborhoods form covers. More specifically, we have the following lemma.

\begin{lemma}[Lemma 3 and 4 of~\cite{ACC06}]
\label{lem:kms-core}
Let $I$ be the output of \kmsp algorithm with input graph $G = (V,E)$, strict vector
\(3\)-coloring \(\{v_i\}_{i\in V}\) and threshold $t$.
At least one of the
following holds:
\begin{enumerate}[(i)]
\item $\E|I|\ge\frac{n}{4} \Phi^c(t)$
\item There is a nonempty induced subgraph \(G'=(V',E')\) of $G$ such that, for
every \(i\in V'\), $\{u_{j\mid i}\}_{j\in N_{G'}(i)}$ is a $(\sqrt{3}t,\frac{1}{8})$-cover.
\end{enumerate}
\end{lemma}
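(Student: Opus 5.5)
We argue by a peeling argument. Call a vertex \(i\) of an induced subgraph \(H\) \emph{good in \(H\)} if \(\{u_{j\mid i}\}_{j\in N_H(i)}\) is \emph{not} a \((\sqrt3 t,\tfrac18)\)-cover. Start with \(H_0=G\). While the current subgraph \(H_k\) is nonempty and contains a good vertex \(i_k\), delete it: \(H_{k+1}=H_k-i_k\). If the process ever stops at a nonempty \(H_k\), then every vertex of \(H_k\) has a neighborhood (inside \(H_k\)) forming a \((\sqrt3t,\tfrac18)\)-cover, and (ii) holds with \(G'=H_k\). Otherwise every vertex is eventually deleted, giving an ordering \(i_1,i_2,\dots,i_n\) of \(V\) in which each \(i_k\) is good in \(G[\{i_k,\dots,i_n\}]\), i.e. with respect to its \emph{later} neighbors. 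We show that this forces (i).

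Given this ordering, we compare the actual algorithm with a canonical matching. Define \(B\subseteq S\) to be the set of vertices \(i_k\in S\) that have some later neighbor \(i_\ell\in S\) with \(\ell>k\). By the computation preceding the lemma and \(u_{j\mid i}\perp v_i\), for \(i=i_k\),
\[
\Pr[i\in B\mid i\in S]\le \Pr\bigl[\exists j\in N_{H_{k-1}}(i):\langle\gamma,u_{j\mid i}\rangle\ge\sqrt3t\bigr]<\tfrac18,
\]
so \(\E|B|\le \tfrac18\E|S|=\tfrac n8\Phi^c(t)\). The set \(S\setminus B\) is independent: for any edge inside \(S\), its earlier endpoint lies in \(B\). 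The remaining issue is that the algorithm outputs \(S\setminus V(M)\) for an \emph{arbitrary} maximal matching, not \(S\setminus B\). Every edge of \(M\) has at least one endpoint (its earlier one) in \(B\), so \(|M|\le|B|\) and \(|V(M)|\le 2|B|\). Hence
\[
\E|I|=\E|S|-\E|V(M)|\ge \E|S|-2\E|B|\ge\Bigl(1-\tfrac14\Bigr)n\Phi^c(t)\ge\tfrac n4\Phi^c(t),
\]
which is (i).

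The main subtlety is this last step: translating the per-vertex cover failure, which only concerns forward edges in the elimination order, into a bound on the vertices removed by an arbitrary maximal matching. The charging \(|V(M)|\le 2|B|\) handles it, since each matching edge is charged to its earlier endpoint and these endpoints are distinct because \(M\) is a matching. Care is also needed to condition only on \(i\in S\) and not on the ordering; the ordering is deterministic, independent of \(\gamma\), so this is fine. The constant \(\tfrac18\) leaves slack; one could even use \(\tfrac38\) and still get \(\tfrac n4\).
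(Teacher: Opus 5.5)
Your proof is correct. The elimination order makes $B$ a vertex cover of $G[S]$, so any matching in $G[S]$, maximal or not, has at most $|B|$ edges, and orthogonality of the $u_{j\mid i}$ to $v_i$ gives $\E|B|<\frac18 n\Phi^c(t)$; this yields even $\E|I|\ge\frac34 n\Phi^c(t)$. The paper does not prove this lemma itself but imports it from ACC06, and your peeling-plus-charging argument looks like the standard route to that cited result, complete as written apart from a harmless off-by-one between $H_{k+1}=H_k-i_k$ and ``$i_k$ good in $H_{k-1}$''.
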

Suppose condition (i) does not hold, which we refer to as \kmsp fails. The idea of \cite{ACC06} is to propagate the cover structure to the next level of neighborhoods.
Fix a root vertex $i$, and consider a $2$-level walk $i\to j\to k$ in $G'$. We define its type and orthogonal component by
\(y_{jk}:=\ip{u_{i\mid j}}{u_{k\mid j}}\) and
\(u_{k\mid i,j}:=u_{k\mid j}-y_{jk}u_{i\mid j}\), respectively.
Since \(u_{i\mid j}\) and \(u_{k\mid j}\) are unit vectors orthogonal to \(v_j\), we have
\(u_{k\mid i,j}\perp\operatorname{span}\{v_i,v_j\}\). By direct substitution,
\begin{align*}
&v_k
=\left(\frac14+\frac34y_{jk}\right)v_i
-\frac{\sqrt3}{4}(1-y_{jk})u_{j\mid i}
+\frac{\sqrt3}{2}u_{k\mid i,j},\\
&v_k-\ip{v_i}{v_k}v_i
=\frac{\sqrt3}{4}(1-y_{jk})(-u_{j\mid i})
+\frac{\sqrt3}{2}u_{k\mid i,j}.
\end{align*}

Lemma~\ref{lem:kms-core} ensures that $\{u_{j\mid i}\}_{j\in N_{G'}(i)}$ and $\{u_{k\mid j}\}_{k\in N_{G'}(j)}$ are both good covers. Thus, $\{-u_{j\mid i}\}_{j\in N_{G'}(i)}$ and $\{u_{k\mid i,j}\}_{k\in N_{G'}(j)}$ are good covers by symmetry and the fact that projecting away one direction does not significantly affect the cover strength (Lemma \ref{lem:projection}). Formally,~\cite{ACC06} shows the following lemma to combine two good covers.

\begin{lemma}[Cover composition~\cite{ACC06,chlamtac2009non}]
\label{lem:acc-cover-composition}

Fix \(c>0\) and let \(s_1\to\infty\). Let
\(X\subseteq\R^d\)
be an \((s_1,\delta_1)\)-cover consisting of unit vectors with size
\(|X| \le \bigl(\Phi^c(s_1)\bigr)^{-(1+c)}.\)
Suppose
\(Y_x\subseteq x^\perp\)
is an \((s_2,\delta_2)\)-cover for every \(x\in X\), where \(0<\delta_2\le1/2\).
Then,
\begin{align*}
\Pr_{\gamma\sim\mathcal N(0,\Id)}
\left[
    \begin{gathered}
    \exists\,x\in X,\ y\in Y_x\ \text{such that}\\
    \langle\gamma,x\rangle\ge s_1,
    \langle\gamma,y\rangle
    \ge
    s_2
    -
    \|y\|
    \left(
        q_2+
       \sqrt{c(1+\varepsilon)}s_1
    \right)
    \end{gathered}
\right]
&\ge \delta_1-O(\tfrac{1}{s_1}),
\end{align*}
where
\(\Phi^c(q_2)=\delta_2\) and $\varepsilon = O(\frac{\log s_1}{s_1^2})$.
\end{lemma}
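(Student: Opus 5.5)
The plan is to condition on which vector of \(X\) fires, decompose \(\gamma\) along that vector and its orthogonal complement, and then apply the cover property of \(Y_x\) to the orthogonal part. The cost of this conditioning is a loss of \(q_2+\sqrt{c(1+\varepsilon)}\,s_1\) in the threshold, and I will pay for it with a Gaussian concentration argument.

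\textbf{Setup.} Let \(E_x:=\{\langle\gamma,x\rangle\ge s_1\}\). Since \(X\) is an \((s_1,\delta_1)\)-cover, \(\Pr[\bigcup_x E_x]\ge\delta_1\). For each \(x\), write \(\gamma=\langle\gamma,x\rangle x+\gamma_x^\perp\), where \(\gamma_x^\perp\) is independent of \(\langle\gamma,x\rangle\). Because \(Y_x\subseteq x^\perp\), we have \(\langle\gamma,y\rangle=\langle\gamma_x^\perp,y\rangle\) for every \(y\in Y_x\). Define
\[
f_x(\gamma):=\max_{y\in Y_x}\frac{\langle\gamma,y\rangle-s_2}{\|y\|},
\]
the signed normalized slack. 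We want to show that, with probability \(\ge\delta_1-O(1/s_1)\), some \(x\) satisfies both \(E_x\) and \(f_x(\gamma)\ge-(q_2+\sqrt{c(1+\varepsilon)}s_1)\).

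\textbf{Concentration for each \(x\).} The function \(g_x(\gamma):=\operatorname{dist}\bigl(\gamma_x^\perp,K_x\bigr)\), with \(K_x:=\{z:\exists y\in Y_x,\ \langle z,y\rangle\ge s_2\}\), is \(1\)-Lipschitz. Moreover, \(f_x\ge-g_x\), since moving \(\gamma\) a distance \(r\) changes \(\langle\gamma,y\rangle/\|y\|\) by at most \(r\). The cover hypothesis gives \(\Pr[g_x=0]\ge\delta_2\). By the Gaussian isoperimetric inequality (Borell–TIS / Sudakov–Tsirelson), \(\Phi^{-1}\bigl(\Pr[g_x\le r]\bigr)\ge\Phi^{-1}(\delta_2)+r=-q_2+r\). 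Therefore
\[
\Pr\bigl[g_x>q_2+r\bigr]\le\Phi^c(r).
\]
Since \(g_x\) depends only on \(\gamma_x^\perp\), this bound also holds conditionally on \(E_x\).

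\textbf{Union bound with \(r=\sqrt{c(1+\varepsilon)}\,s_1\).} I would not require the bad event to fail for every \(x\), only that no \(x\) is simultaneously good for \(E_x\) and bad for \(g_x\). Using independence of \(E_x\) and \(g_x\),
\[
\Pr\bigl[\exists x:\ E_x\wedge g_x>q_2+r\bigr]\le|X|\,\Phi^c(s_1)\,\Phi^c(r)\le\Phi^c(s_1)^{-c}\,\Phi^c(r).
\]
The standard tail estimate gives \(\Phi^c(s)=e^{-s^2/2}\cdot\Theta(1/s)\). Choosing \(r^2=c(1+\varepsilon)s_1^2\) with \(\varepsilon=C\log s_1/s_1^2\) absorbs the polynomial prefactors, so the right-hand side is \(\exp\bigl(-\tfrac{c}{2}\varepsilon s_1^2+O(\log s_1)\bigr)=O(1/s_1)\) once \(C\) is a large enough constant. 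The probability of the target event is then at least \(\Pr[\bigcup_x E_x]\) minus this quantity, namely \(\delta_1-O(1/s_1)\). For every \(x\) with \(E_x\) and \(g_x\le q_2+r\), the maximizing \(y\) satisfies \(\langle\gamma,y\rangle\ge s_2-\|y\|(q_2+r)\), which is the claimed conclusion.

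\textbf{Main obstacle.} The delicate step is the isoperimetric bound. The additive shift must be exactly \(q_2\) with no constant loss, which is why I use the sharp Gaussian isoperimetry rather than Lipschitz concentration around the median. This is also presumably where \(\delta_2\le1/2\) enters, since it makes \(q_2\ge0\). A second check is the normalization by \(\|y\|\): vectors in \(Y_x\) need not be unit length, so distances must be measured on the normalized inequalities, and the definition of \(K_x\) together with \(f_x\ge-g_x\) has to be verified carefully.
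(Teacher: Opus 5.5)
Your proof is correct, and it is essentially the standard argument from \cite{ACC06,chlamtac2009non}. The paper itself only quotes this lemma and does not reprove it. Each step checks out:
\begin{itemize}
\item $f_x\ge -g_x$ holds because $K_x$ is a finite union of closed half-spaces.
\item The isoperimetric bound $\Pr[g_x>q_2+r]\le\Phi^c(r)$ is valid. Here $\delta_2\le 1/2$ makes $q_2\ge 0$, so the enlargement radius $q_2+r$ is nonnegative; your guess about the role of that hypothesis is right.
\item Independence of $\langle\gamma,x\rangle$ and $\gamma_x^\perp$ gives the product bound.
\item $\Phi^c(s_1)^{-c}\,\Phi^c\bigl(\sqrt{c(1+\varepsilon)}\,s_1\bigr)=O(s_1^{c-1})\,e^{-c\varepsilon s_1^2/2}$, which is $O(1/s_1)$ once $\varepsilon=C\log s_1/s_1^2$ for a suitable constant $C$.
\end{itemize}

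The useful comparison is with the paper's proof of its refined Lemma~\ref{lem:orthogonal-cover-composition}. That proof has exactly your skeleton: lower-bound $\Pr[\bigcup_x A_x]$, then subtract $\sum_x\Pr[A_x]\Pr[B_x]$, with independence coming from $Y_x\subseteq x^\perp$. It differs only in how it bounds the inner failure probability $\Pr[B_x]$.
\begin{itemize}
\item You use Gaussian isoperimetry, which sees only the cover probability $\delta_2$.
\item The paper uses Ehrhard concavity of $u\mapsto\Phi^{-1}\bigl(\Pr[M_{Y_x}(\gamma)\le u]\bigr)$; isoperimetry is itself a limiting case of Ehrhard. This lets it interpolate between the cover threshold and a much higher threshold, where a union bound over $|Y_x|$ makes $\Pr[M_{Y_x}(\gamma)>u]$ tiny.
\end{itemize}

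In the paper's scaling, take $s_1=\lambda t$ and $\log|X|\approx\frac E2 t^2$; matching this with $|X|\le\Phi^c(s_1)^{-(1+c)}$ gives $\sqrt c\,s_1\approx\sqrt{E-\lambda^2}\,t$. Your loss is then about $U\sqrt{E-\lambda^2}\,t$. The paper's loss is $\sqrt{(E-\lambda^2)\bigl(U^2-\frac{1}{1+c'}\bigr)}\,t$, where $c'$ is the exponent in $\log|Y_x|\le\frac{3(1+c')}{2}t^2$. The subtracted $\frac{1}{1+c'}$ is exactly what the cardinality information about $Y_x$ buys, and it is the source of the paper's improvement. In exchange, your route needs no bound on $|Y_x|$ at all, which is why the ACC lemma applies more generally but gives a weaker threshold.
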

By applying this lemma, and with some additional effort, one can show that there exists a root $i$, a subset of vertices $K \subseteq N_{G'}^{(2)}(i)$, and a constant $a>0$ such that the set
\(\left\{\frac{v_k-\ip{v_i}{v_k}v_i}{\sqrt{1-\ip{v_i}{v_k}^2}}\right\}_{k\in K}\) is a $(a\sqrt{3}t,\Omega(1))$-cover. If $a\sqrt3\,t$ exceeds $\sqrt{2\log n}$ by a fixed positive multiple of $t$, this is a contradiction, as a direct union bound gives
\begin{align*}
\Omega(1)
&\le\Pr_\gamma\left[
\exists k\in K:
\ip{\gamma}{\frac{v_k-\ip{v_i}{v_k}v_i}{\sqrt{1-\ip{v_i}{v_k}^{\,2}}}}
\ge a\sqrt3\,t-o(t)
\right]\\
&\le n\Phi^c\bigl(a\sqrt3\,t-o(t)\bigr)=o(1).
\end{align*}
It therefore follows that \kmsp returns an independent set of the required size in expectation.

Bansal, Huang, and Lee~\cite{BHL26} extend this analysis to third-level neighborhoods, obtaining large independent sets from suitable subsets of second- or third-level neighborhoods. Their third-level argument combines cover composition with an improved vector coloring obtained from a stronger sum-of-squares relaxation, followed by KMS rounding.

Our analysis extends the neighborhood argument to every fixed number of levels, but uses the resulting structure differently. Rather than rounding an improved vector coloring of a higher-level neighborhood, we propagate the cover structure and use a potential function to show that these cover structures cannot persist. This establishes a larger expected output for KMS$'$ itself. We outline this argument next.

\subsection{Outline of our Proof} \label{subsec:outline_proof}

Our algorithm is based on a refined analysis of the KMS$'$ Gaussian rounding procedure. Our contribution is the following theorem in the sparse regime.

\begin{theorem}
\label{thm:sparse-progress}
Fix \(0<c<c^\dagger := \frac{\sqrt{97}-7}{12}\). For all sufficiently large \(n\), there is a
randomized polynomial-time algorithm that, given a \(3\)-colorable
\(n\)-vertex graph \(G\) satisfying
\(\Delta(G)\le n^{3(1+c)/(5+3c)}\), with high probability returns an independent set of size $\Omega\left(n^{1-\frac{1}{5+3c}}\right)$.
\end{theorem}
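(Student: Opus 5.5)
The plan is to run \kmsp (Algorithm~\ref{alg:kms-prime}) with threshold \(t\) chosen so that \(\Phi^c(t)=\Delta^{-1/(3(1+c))}\), where \(\Delta=\Delta(G)\le n^{3(1+c)/(5+3c)}\). Then \(\frac n4\Phi^c(t)\ge \frac n4 n^{-1/(5+3c)}\), which is the claimed size \(\Omega(n^{1-1/(5+3c)})\). Repeating the rounding polynomially many times and keeping the largest output boosts the expectation to a high-probability guarantee, since \(|I|\le n\) and the expectation is at least a \(1/\mathrm{poly}\) fraction of \(n\). By Lemma~\ref{lem:kms-core}, it therefore suffices to rule out alternative (ii). So I would assume for contradiction that there is a nonempty induced subgraph \(G'\) in which every vertex's edge directions \(\{u_{j\mid i}\}_{j\in N_{G'}(i)}\) form a \((\sqrt3 t,\tfrac18)\)-cover, and note \(|N_{G'}(i)|\le\Delta=(\Phi^c(t))^{-3(1+c)}\). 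Note also \(t\sim\sqrt{2\log\Delta/(3(1+c))}\to\infty\).

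\textbf{Propagation.} The key step is to propagate certificates along walks of length \(\ell\) for an arbitrary fixed \(\ell\), generalizing the two-level argument sketched after Lemma~\ref{lem:acc-cover-composition}. Fix a root \(i\). For a walk \(i=w_0\to w_1\to\cdots\to w_\ell\) in \(G'\), I would write \(v_{w_\ell}\) as a coefficient times \(v_i\) plus a combination of orthonormalized ``fresh'' directions, one introduced at each step, exactly as in the decomposition of \(v_k\). The certificate at level \(\ell\) is a pair: a threshold profile \(s_\ell\) and a set of walk endpoints whose normalized components orthogonal to \(v_i\) form an \((s_\ell,\Omega(1))\)-cover. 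To pass to level \(\ell+1\), I apply Lemma~\ref{lem:acc-cover-composition}. The outer cover is the level-\(\ell\) certificate. The inner covers are the projected neighborhood covers \(\{u_{k\mid w_\ell}\}\) with the earlier directions projected away; Lemma~\ref{lem:projection} guarantees that removing \(O(\ell)\) directions costs only \(o(t)\). The size hypothesis \(|X|\le\Phi^c(s_1)^{-(1+c)}\) needs care, because at higher levels the set of endpoints can be as large as \(n\) rather than \(\Delta\). I would handle this by pruning to a subcover, or by measuring sizes relative to \(n\), and this is where the sparse assumption \(\Delta\le n^{3(1+c)/(5+3c)}\) must enter. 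The pigeonholing over walk types \(y\) (discretized into \(O(1)\) buckets, losing only constant factors in \(\delta\)) selects at each step the dominant type.

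\textbf{Potential and contradiction.} I would define a potential \(\Psi_\ell = s_\ell/t\) adjusted by a type-dependent correction so that it remains bounded: any \((s,\Omega(1))\)-cover by at most \(n\) unit vectors forces \(s\le\sqrt{2\log n}+o(t)=O(t)\) by the union bound displayed above. The goal is then to show that each propagation step raises \(\Psi\) by at least a fixed \(\eta(c)>0\) whenever \(c<c^\dagger\). Iterating \(\ell\approx C/\eta\) steps would then exceed the bound, which is the desired contradiction. The threshold \(c^\dagger=(\sqrt{97}-7)/12\) should arise as the root of the quadratic \(6c^2+7c-2=0\), which is exactly the condition that the gain per step, computed from the composed threshold \(\sqrt3 t\cdot(\text{new coefficient}) - (q_2+\sqrt{c}\,s_1)\) optimized over the type \(y\), stays positive.

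\textbf{Main obstacle.} I expect the hard part to be designing the recursive certificate and potential so that the increment is uniform in \(\ell\) and in the worst-case types. The ACC lemma as stated loses \(\sqrt{c}\,s_1\), which grows with the level. A strengthened composition lemma that tracks the geometry of the walk more finely seems necessary. I would first try to strengthen the loss term by conditioning on \(\langle\gamma,x\rangle\approx s_1\) exactly and using the size bound on the inner covers (degree \(\le\Delta\)) rather than on \(|X|\).
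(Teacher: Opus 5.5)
Your outer frame matches the paper: rule out alternative (ii) of Lemma~\ref{lem:kms-core}, propagate certificates level by level, make a bounded potential grow by a fixed amount per level, then boost by repetition. However, the steps that carry the proof are either missing or would fail as you propose them.

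First, $t$ must be normalized by $n$, not by $\Delta$. The paper sets $\Phi^c(t)=n^{-1/(5+3c)}$, so that $\log n=\frac{5+3c}{2}t^2+o(t^2)$. This controls both the final union bound and the outer covers at levels $\ge2$, which can only be bounded by $n$ (pruning a cover generally lowers its threshold). Sparsity then becomes $\log\Delta\le\frac{3(1+c)}{2}t^2+o(t^2)$. With your choice $\Phi^c(t)=\Delta^{-1/(3(1+c))}$, the ratio $\log n/t^2$ is unbounded when $\Delta\ll n^{3(1+c)/(5+3c)}$, so no uniform per-step gain is possible. You would also be claiming a stronger rounding guarantee than the theorem states.

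Second, and decisively, Lemma~\ref{lem:acc-cover-composition} is too weak. Even with its size hypothesis rescaled to $\log|X|\le\frac{5+3c}{2}t^2$, its loss is $\|y\|\sqrt{5+3c-\lambda^2}\,t$. The paper's new Lemma~\ref{lem:orthogonal-cover-composition} reduces this to $\sqrt{(5+3c-\lambda^2)(\|y\|^2-\frac{1}{1+c})}\,t$ by using \emph{both} cardinalities:
\begin{enumerate}
\item Ehrhard's inequality makes $u\mapsto\Phi^{-1}(\Pr[M_{Y_x}(\gamma)\le u])$ concave.
\item Interpolating between the cover threshold and a high threshold, where the union bound over $|Y_x|\le e^{\frac{3(1+c)}{2}t^2+o(t^2)}$ is small, bounds the lower tail of $M_{Y_x}$ (Lemma~\ref{lem:gaussian-maximum-lower-tail}).
\item A union bound over $X$ then finishes, using that $\langle\gamma,x\rangle$ and $M_{Y_x}$ are independent because $Y_x\subseteq x^\perp$.
\end{enumerate}
The only cost is that cover probabilities become $\exp(-o(t^2))$, which is harmless. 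So your idea of using the inner size bound ``rather than'' $|X|$ is not enough. With $\|y\|^2=1-y^2$, the subtracted $\frac1{1+c}$ produces the factor $\frac{c}{1+c}-y^2$ in the transition, and without it the recursion collapses near $c^\dagger$. For type $y=0$, one ACC step from $(-\frac12,\sqrt3)$ gives $\lambda\approx1.57$ at $a=\frac14$. At the next step the rescaled loss is already $\approx1.80\,t$, which exceeds the inner threshold $\sqrt3\,t$.

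Third, you have not identified the certificate or the potential. No $O(\ell)$ fresh directions are needed. The paper tracks only two numbers: the common root correlation $a$, and the normalized threshold $\lambda$ of $x_j=(v_j-a_jv_i)/\sqrt{1-a_j^2}$. It defines the type relative to the root, $y_{jk}=\langle u_{k\mid j},z_j\rangle$ with $z_j=(v_i-a_jv_j)/\sqrt{1-a_j^2}$, and pins it to within $o(1)$ by pigeonholing over $O((\log t)^2)$ intervals. The decomposition $v_k-\langle v_i,v_k\rangle v_i=-A(a,y)x_j+\frac{\sqrt3}{2}r_{jk}$, with $r_{jk}\perp\operatorname{span}\{v_i,v_j\}$, then gives the transition $(a,\lambda)\mapsto(B(a,y),\lambda_{\mathrm{new}})$ of Lemma~\ref{lem:recursive-step}.

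Because $\lambda$ alone can decrease (the normalization $\sqrt{1-a^2}$ changes), the correction must be a function of $a$, not of the type. The paper uses $V(a,\lambda)=\lambda-\frac{3(1+c)}{1+4c}\bigl(\frac{1}{\sqrt{1-a^2}}-1\bigr)$, derived from the two-cycle $a\mapsto -a$. Lemma~\ref{lem:potential-properties} shows two things:
\begin{enumerate}
\item $|a|\le L$ is preserved, via $A^2+B^2=\frac14+\frac34y^2$; this also keeps $A(a,y)>0$ and is needed for the uniform bound.
\item By an explicit factorization plus Cauchy--Schwarz, $V$ rises by at least $\mu(c)-\tau$ at every step.
\end{enumerate}
The quadratic $6c^2+7c-2<0$ is exactly $9(1+c)>(5+3c)(1+4c)$, i.e.\ $3\sqrt{(1+c)/(1+4c)}>\sqrt{5+3c}$ in that Cauchy--Schwarz step. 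It does not come from optimizing ACC's $\sqrt{c}\,s_1$ loss over $y$. As it stands, your proposal states the uniform-gain claim as a goal rather than proving it, and that claim is the entire content of the theorem.
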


Our improvement comes from running \kmsp with a lower threshold, thereby producing a larger initial random set, and analyzing the failure of this more aggressive rounding recursively. In particular, we choose $t$ so that $\Phi^c(t) \approx n^{-1/(5 + 3c)}$.
If sufficiently many selected vertices survive the maximal-matching cleanup, we immediately obtain the desired independent set. Otherwise, as in Lemma \ref{lem:kms-core}, the matching yields a cover at every vertex of a nonempty induced subgraph.

We show that each $1$-level certificate can be propagated to a cover at an arbitrary fixed number of neighborhood levels. We then construct a potential function that increases by a uniform positive amount at every step. Since all feasible certificates lie in a fixed compact region, this increase cannot continue indefinitely. Thus, the failure certificate cannot exist, forcing the more aggressive rounding to return a large independent set. Making this recursive argument work requires several new technical ideas, which we describe next.

Our starting point is to identify the information that must be retained for each certificate. Fix a root \(i\), and consider a
certificate supported on the endpoints \(j\) of walks beginning at \(i\). To
extend this certificate by one level, we append an edge \(j\to k\) and seek a
new certificate supported on the resulting endpoints \(k\). The geometry of
this extension depends on the correlation between \(v_j\) and the fixed root
vector \(v_i\), and the component of \(v_j\) orthogonal to \(v_i\). These quantities necessarily
interact: changing the correlation with the root changes the length of the
orthogonal component and hence the normalization of the Gaussian threshold.
This motivates the recursive Gaussian certificate formalized in
Definition~\ref{def:recursive-gaussian-certificate}. Such a certificate rooted
at \(i\) records a common root correlation \(a\), a Gaussian threshold \(s\),
and a cover probability \(\delta\). A level-\(\ell\) certificate is supported
on endpoints of length-\(\ell\) walks from \(i\). In particular,
Lemma~\ref{lem:kms-core} gives the level-\(1\) certificate
\(S^{(1)}:=N_{G'}(i)\) with parameters
\(\left(-\frac12,\sqrt3\,t,\frac18\right)\).

To propagate a certificate from level \(\ell\) to level \(\ell+1\), we
extend each walk by one edge \(j\to k\). These extensions need not
initially have the same correlation with the root. However, once the
correlation \(a\) of \(v_j\) with the root \(v_i\) is fixed, the correlation
of \(v_k\) with \(v_i\) is determined by a single scalar \(y\), which we will refer to as the type of the walk, measuring
how strongly the new edge direction points back toward the root. With some computation, the new correlation is
\(a'=-\frac a2+\frac{\sqrt3}{2}\sqrt{1-a^2}\,y\). We partition the possible values of \(y\) into short intervals and retain
one interval on which the cover guarantees remain valid. Since
there are only subpolynomially many intervals, this restriction incurs only
a negligible loss. Then, up to \(o(1)\) error, all retained extended walks have the same type \(y\), and hence the endpoints of these walks have the same new correlation \(a'\). To show that the cover structure also persists under this propagation, we prove a more refined cover-composition
lemma than the one in \cite{ACC06}.

 Having established that recursive Gaussian certificates can be propagated,
we use this recursion to rule out failure of KMS$'$. Every
recursive certificate consists of at most \(n\) unit vectors. Therefore, a
union bound gives
\begin{align*}
\Pr\left[
\max_{j\in S^{(\ell)}}\langle\gamma,x_j\rangle
\ge\lambda t-o(t)
\right]
&\le n\Phi^c\bigl(\lambda t-o(t)\bigr).
\end{align*}
Since $\log n=\frac{5+3c}{2}t^2+o(t^2),$ the right-hand side is \(\exp(-\Omega(t^2))\) whenever $\lambda>\sqrt{5+3c}+\Omega(1),$ which will contradict the \(\exp(-o(t^2))\) lower bound we attain from recursive Gaussian certificates.

It would thus suffice to show that \(\lambda\) increases by a fixed amount
at every propagation step. However, this is not generally true. The
threshold is measured after removing the component in the root direction
and normalizing the remainder. Since this normalization changes with the
correlation \(a\), progress contributed by the new edge can be offset by a
change in correlation. Thus, \(\lambda\) alone does not capture the progress
made by the recursion.

We account for this interaction using the potential function
\begin{align*}
V(a,\lambda)
&:=\lambda-\frac{3(1+c)}{1+4c}
\left(\frac{1}{\sqrt{1-a^2}}-1\right).
\end{align*}
The correction term compensates for changes in the normalized threshold
caused by changes in the correlation with the root. In
Section~\ref{subsec:properties}, we explain why this potential is
natural and derive its precise form by examining the recursive transition. We then prove that every propagation step satisfies
\(V(a',\lambda')-V(a,\lambda)\ge\frac{\mu(c)}2\) for some \(\mu(c)>0\) whenever \(c<c^\dagger\).

Under failure of KMS$'$, the recursion could be continued through any fixed
number of levels, while the potential would increase by a uniform positive
amount at every step. This is impossible because \(V\) is bounded on the
compact region of feasible parameters. We conclude that the failure
certificate cannot exist, forcing KMS$'$ to return an independent set of
the size asserted in Theorem~\ref{thm:sparse-progress}.

\subsection{Organization of the Paper} \label{subsec:organization}

In Section~\ref{sec:prelim}, we first show that cover structures can be preserved under projection (Lemma~\ref{lem:projection}). Then, we prove a more refined cover-composition lemma than the one in~\cite{ACC06} (Lemma~\ref{lem:orthogonal-cover-composition}). In Section~\ref{subsec:failure-certificate}, we first introduce the notion of recursive Gaussian certificates (Definition~\ref
{def:recursive-gaussian-certificate}) and then show how to propagate them from one level to the next (Lemma~\ref{lem:level-2-certificate}, Lemma~\ref{lem:recursive-step}). In Section~\ref{sec:potential-function}, we define the potential function with motivation in Section~\ref{subsec:properties} and give the proof of Theorem~\ref{thm:sparse-progress} in Section~\ref{subsec:proof-of-main}. Appendix~\ref{subsec:dense-sparse-balance}
reviews Blum's notion of progress and applies the sparse--dense combination
framework to deduce Theorem~\ref{thm:main} from
Theorem~\ref{thm:sparse-progress}.

\section{Recursive Gaussian Certificates} \label{sec:main}

\subsection{Composition of Covers} \label{sec:prelim}

The main point of this section is to show that covers can be preserved under two operations: projection and composition. The projection lemma is standard and appears, for example, as Lemma~3.4.2 in Chlamtac's thesis~\cite{chlamtac2009non}.

\begin{lemma}[Lemma 3.4.2~\cite{chlamtac2009non}]\label{lem:projection}
Let \(X\) be an \((s,\delta)\)-cover in $\R^d$ consisting of vectors of norm
at most one and \(z\in\R^d\) be a unit vector.  For every \(\varepsilon \ge0\), the
projected family \(X^\perp := \left\{ x-\langle x,z\rangle z:x\in X \right\}\) is an $(s- \varepsilon,\delta-2\Phi^c(\varepsilon))$-cover.
\end{lemma}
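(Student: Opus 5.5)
The plan is a coupling argument that splits the Gaussian along \(z\). Write \(\gamma = g z + \gamma^\perp\), where \(g=\langle\gamma,z\rangle\sim\mathcal N(0,1)\) and \(\gamma^\perp=\gamma-\langle\gamma,z\rangle z\) is independent of \(g\). For \(x\in X\) set \(x^\perp=x-\langle x,z\rangle z\). Then
\[
\langle\gamma,x^\perp\rangle=\langle\gamma^\perp,x\rangle=\langle\gamma,x\rangle-g\langle x,z\rangle .
\]
We want to show that, with probability at least \(\delta-2\Phi^c(\varepsilon)\), some \(x\in X\) satisfies \(\langle\gamma,x^\perp\rangle\ge s-\varepsilon\).

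The naive attempt is to take \(x\) with \(\langle\gamma,x\rangle\ge s\) and note \(|g\langle x,z\rangle|\le |g|\), since \(\|x\|\le1\). The event \(|g|\ge\varepsilon\) has probability \(2\Phi^c(\varepsilon)\), so outside an event of that probability the projected vector gives \(\langle\gamma,x^\perp\rangle\ge s-\varepsilon\). This already yields exactly the claimed bound. The event \(A=\{\exists x:\langle\gamma,x\rangle\ge s\}\) has \(\Pr[A]\ge\delta\), and on \(A\cap\{|g|<\varepsilon\}\) the same \(x\) witnesses the projected cover. Hence
\[
\Pr\bigl[\exists x:\langle\gamma,x^\perp\rangle\ge s-\varepsilon\bigr]\ge \Pr[A]-\Pr[|g|\ge\varepsilon]\ge \delta-2\Phi^c(\varepsilon).
\]
The one point to check is the bound \(|\langle x,z\rangle|\le\|x\|\|z\|\le1\), which uses that \(z\) is a unit vector and the vectors in \(X\) have norm at most one. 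Since we only need \(|g\langle x,z\rangle|<\varepsilon\) on \(\{|g|<\varepsilon\}\), this suffices.

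As an alternative, or for a slightly sharper constant, one could use the sign symmetry of \(g\). Replacing \(\gamma\) by the reflected Gaussian \(\gamma'=-gz+\gamma^\perp\), which has the same law, the two quantities \(\langle\gamma,x\rangle\) and \(\langle\gamma',x\rangle\) average to \(\langle\gamma^\perp,x\rangle\). So whenever either is at least \(s\), the maximum of \(\langle\gamma^\perp,x\rangle\) over \(X\) is close to it. This refinement is unnecessary for the stated bound, so I would present only the simple union-bound argument above.

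I expect no real obstacle. The only care needed is to note that the case \(\varepsilon=0\) is trivially consistent: the bound becomes \(\delta-1\le0\). It is also worth noting that independence of \(g\) and \(\gamma^\perp\) is not even needed, because a plain union bound suffices.
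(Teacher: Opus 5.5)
Your argument is correct and is essentially the paper's proof. Both decompose $\gamma$ along $z$, intersect the cover event for $X$ with $\{|\langle\gamma,z\rangle|\le\varepsilon\}$ (your strict version differs only by a null set), and use $|\langle x,z\rangle|\le 1$ together with $\Pr[E_1\cap E_2]\ge\Pr[E_1]-\Pr[E_2^c]$. As you observe, the independence of $\langle\gamma,z\rangle$ and $\gamma^\perp$, which the paper mentions, is never actually used, and the reflection aside can be dropped.
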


\begin{proof} Let $\gamma \sim \mathcal{N}(0, \Id_d)$. Decompose \(\gamma=\gamma_\perp+\xi z\), where $\langle\gamma_\perp, z\rangle = 0$. We know that $\xi = \langle\gamma,z\rangle\sim \mathcal N(0,1)$ and $\gamma_\perp$ is independent of $\xi$. Define $E_1$ to be the event that $\max_{x\in X}\langle\gamma,x\rangle\ge s$, and $E_2$ to be the event that $|\xi|\le \varepsilon$. By definition of $X$ being an $(s,\delta)$-cover, we have
\begin{align*}
\mathbb{P}\left[\max_{x\in X}\left\langle\gamma_\perp,x-\langle x,z\rangle z\right\rangle\ge s-\varepsilon\right]
\ge \mathbb{P}[E_1\cap E_2] 
\ge \mathbb{P}[E_1]-\mathbb{P}[E_2^c]
\ge \delta-2\Phi^c(\varepsilon).
\end{align*}
\end{proof}

The cover-composition lemma in this section is a refined version of the corresponding lemma in~\cite{ACC06} (Lemma~\ref{lem:acc-cover-composition}). 
Before presenting our lemma, we first introduce some necessary background. The new ingredient is Ehrhard's inequality, which is stated as follows.

\begin{fact}[Ehrhard's Inequality \cite{Ehrhard83}]
    Let \(\gamma_d\) denote standard Gaussian measure on \(\mathbb R^d\).
    Given convex Borel sets \(A,B\subseteq\mathbb R^d\) with
    \(0<\gamma_d(A),\gamma_d(B)<1\), and \(\theta\in[0,1]\), we have
\begin{align}
\Phi^{-1}\!\left(\gamma_d\bigl(\theta A+(1-\theta)B\bigr)\right)
&\ge \theta\Phi^{-1}\!\left(\gamma_d(A)\right)
+(1-\theta)\Phi^{-1}\!\left(\gamma_d(B)\right).
\label{eq:ehrhard}
\end{align}
\end{fact}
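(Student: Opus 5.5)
The plan is to prove the functional form of Ehrhard's inequality by Borell's heat-flow argument and then specialize to indicator functions. Fix $\theta\in(0,1)$; the cases $\theta\in\{0,1\}$ are trivial. The functional statement to aim for is the following. Let $f,g,h:\R^d\to[0,1]$ be measurable and satisfy
\begin{align*}
\Phi^{-1}\bigl(h(\theta x+(1-\theta)y)\bigr)\ge\theta\,\Phi^{-1}\bigl(f(x)\bigr)+(1-\theta)\,\Phi^{-1}\bigl(g(y)\bigr)
\qquad\text{for all } x,y.
\end{align*}
Then the same inequality holds for the Gaussian integrals $\int h\,d\gamma_d$, $\int f\,d\gamma_d$, $\int g\,d\gamma_d$. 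Taking $f=\mathbbm 1_A$, $g=\mathbbm 1_B$ and $h=\mathbbm 1_{\theta A+(1-\theta)B}$ gives \eqref{eq:ehrhard}. Here we use the conventions $\Phi^{-1}(0)=-\infty$ and $\Phi^{-1}(1)=+\infty$, and the hypothesis $0<\gamma_d(A),\gamma_d(B)<1$ keeps the right-hand side finite. Convexity of $A$ and $B$ is used only for measurability of $\theta A+(1-\theta)B$ and for the approximation step below. By a standard approximation, I would first assume that $f$ and $g$ are smooth and take values in $[\eta,1-\eta]$, and that $h$ is smooth. This can be arranged by mollifying the indicators of slightly shrunk or enlarged convex sets and then letting $\eta\to0$.

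Next, run the Ornstein--Uhlenbeck or heat semigroup $P_\tau$ on each function. Set $F(\tau,x)=\Phi^{-1}(P_\tau f(x))$, and define $G$ and $H$ analogously. A direct computation from $\partial_\tau P_\tau f=\tfrac12\Delta P_\tau f$, together with $(\Phi^{-1})'=1/\phi(\Phi^{-1})$ and $\phi'(u)=-u\phi(u)$, shows that $U=\Phi^{-1}(P_\tau f)$ solves
\begin{align*}
\partial_\tau U=\tfrac12\Delta U-\tfrac12\,U\,|\nabla U|^2 .
\end{align*}
Then consider
\begin{align*}
C(\tau,x,y):=H\bigl(\tau,\theta x+(1-\theta)y\bigr)-\theta F(\tau,x)-(1-\theta)G(\tau,y).
\end{align*}
By hypothesis $C(0,\cdot,\cdot)\ge0$. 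As $\tau\to\infty$, the quantity $C(\tau,x,y)$ tends to exactly the desired integral inequality. So the goal is that $C\ge0$ is preserved for all $\tau$.

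The key step is a parabolic maximum principle, which I expect to be the main obstacle. Suppose $C$ attains a negative minimum at some $(\tau_0,x_0,y_0)$, handling noncompactness via a penalty term. At such a point, first-order conditions give $\nabla H=\nabla F=\nabla G=:p$ at the respective arguments, so the three gradient norms coincide. The nonlinear terms then combine to $-\tfrac12|p|^2\bigl(H-\theta F-(1-\theta)G\bigr)=-\tfrac12|p|^2C$, which has a favorable sign when $C<0$. For the second-order terms, I would use Borell's trick and apply the Laplacian in the $2d$ variables $(x,y)$ along the direction $(w,w)$. This shows that the operator $\tfrac12\Delta_{\mathrm{diag}}$ applied to $C$ equals $\tfrac12\Delta H-\tfrac12\theta\Delta F-\tfrac12(1-\theta)\Delta G$, up to terms controlled by the second-order minimum condition. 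A cleaner implementation is to compare $H(\theta x+(1-\theta)y)$ with the mixed Hessian of the $2d$-variable function and use nonnegativity of the relevant quadratic form, which needs $\theta^2\le\theta$. From this I would derive $\partial_\tau C\ge\tfrac12\mathcal L C-\tfrac12|p|^2C$ in the viscosity sense at minima, which contradicts a negative minimum.

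The remaining technical work is justifying the minimum principle on the unbounded domain. I would do this by adding $\varepsilon(|x|^2+|y|^2+\tau)$ and using that $F,G,H$ have at most linear growth when $f,g\in[\eta,1-\eta]$. I would then pass to the limits $\eta\to0$ and the removal of the mollification by monotone convergence.
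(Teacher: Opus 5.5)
The paper gives no proof of this fact; it quotes Ehrhard, whose original argument is Gaussian symmetrization for convex sets. You instead follow Borell's heat-flow proof of the functional inequality. That is a legitimate and more general route, and its core is sound. The equation $\partial_\tau U=\tfrac12\Delta U-\tfrac12U|\nabla U|^2$ is right. Gradient matching at a minimum makes the nonlinear terms collapse to $-\tfrac12|p|^2C$. With the diagonal operator $\mathcal L:=\tfrac12\sum_i(\partial_{x_i}+\partial_{y_i})^2$ one has exactly $\mathcal LC=\tfrac12\Delta H-\tfrac\theta2\Delta F-\tfrac{1-\theta}2\Delta G$, and $\mathcal LC\ge0$ at a spatial minimum, so no condition like $\theta^2\le\theta$ enters. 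One small slip: with the heat semigroup the conclusion is read off at $\tau=1$, $x=y=0$. The limit $\tau\to\infty$ belongs to the Ornstein--Uhlenbeck version.

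The genuine gap is the reduction from sets to regular functions. The indicator triple satisfies your functional hypothesis only through the convention $(-\infty)+(+\infty)=-\infty$, and the symmetric truncation to $[\eta,1-\eta]$ destroys this. Suppose $f,g\ge\eta$ everywhere and each attains $1-\eta$, as rescaled mollified indicators do. Take $y_0$ with $g(y_0)=1-\eta$. Every $z\in\R^d$ equals $\theta x+(1-\theta)y_0$ for some $x$, so the hypothesis forces $\Phi^{-1}(h(z))\ge(1-2\theta)K$, where $K:=\Phi^{-1}(1-\eta)$. Swapping the roles of $f$ and $g$ gives $\Phi^{-1}(h(z))\ge(2\theta-1)K$. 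Hence $h\ge\Phi(|1-2\theta|K)\ge\tfrac12$ on all of $\R^d$. This leaves two outcomes, both bad:
\begin{itemize}
\item If $h$ is a mollified indicator of an enlarged $\theta A+(1-\theta)B$, then $C(0,\cdot,\cdot)\ge0$ fails and the maximum principle never starts.
\item If $h$ is admissible, then $\int h\,d\gamma_d\ge\tfrac12$ for every $\eta$, so it does not converge to $\gamma_d(\theta A+(1-\theta)B)$.
\end{itemize}
The fix is to make the lower level far more extreme than the upper one, and to build approximants that satisfy the hypothesis exactly. With $D:=\theta A+(1-\theta)B$ and $0<K\ll M'$, set
\begin{gather*}
F:=\max\{K-M\operatorname{dist}(\cdot,A),\,-M'\},\qquad G:=\max\{K-M\operatorname{dist}(\cdot,B),\,-M'\},\\
H:=\max\{K-M\operatorname{dist}(\cdot,D),\ \max(\theta,1-\theta)K-\min(\theta,1-\theta)M'\}.
\end{gather*}
A case check using $\operatorname{dist}(\theta x+(1-\theta)y,D)\le\theta\operatorname{dist}(x,A)+(1-\theta)\operatorname{dist}(y,B)$ gives $H(\theta x+(1-\theta)y)\ge\theta F(x)+(1-\theta)G(y)$ for all $x,y$.

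The functions $\Phi(F),\Phi(G),\Phi(H)$ are Lipschitz with values in a compact subinterval of $(0,1)$. So your penalized maximum principle applies to them as stated, and the semigroup supplies smoothness for $\tau>0$, so no mollification is needed. Finally, let $M\to\infty$, then $M'\to\infty$, then $K\to\infty$. Dominated convergence, together with $\gamma_d(\partial A)=\gamma_d(\partial B)=\gamma_d(\partial D)=0$ for convex sets of positive measure, then yields \eqref{eq:ehrhard}.
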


We also state the following fact on bounding the tail probability of a standard Gaussian, which will be useful in our analysis.
\begin{fact}[Gaussian Tail Bounds and Mills' Ratio]\label{fact:tail-bound}
    For every \(t>0\),
\begin{align}
    \frac{t}{1+t^2}\,\frac{1}{\sqrt{2\pi}}\exp\left(-\frac{t^2}{2}\right)
    \;\le\;
    \Phi^c(t)
    \;\le\;\exp\!\left(-\frac{t^2}{2}\right).
    \label{eq:mills}
\end{align}
\end{fact}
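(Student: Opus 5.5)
The two inequalities are independent, so I will prove them separately. The upper bound will come from an exponential-moment (Chernoff) argument, and the lower bound from a monotonicity argument comparing \(\Phi^c\) with the explicit function \(h(t):=\frac{t}{1+t^2}\phi(t)\), where \(\phi(t)=\frac{1}{\sqrt{2\pi}}e^{-t^2/2}\).

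For the upper bound, let \(Z\sim\mathcal N(0,1)\) and fix \(\lambda>0\). Markov's inequality applied to \(e^{\lambda Z}\), together with \(\E[e^{\lambda Z}]=e^{\lambda^2/2}\), gives
\begin{align*}
\Phi^c(t)=\Pr[Z\ge t]\le e^{-\lambda t}\,\E\bigl[e^{\lambda Z}\bigr]=e^{-\lambda t+\lambda^2/2}.
\end{align*}
Choosing \(\lambda=t\) yields \(\Phi^c(t)\le e^{-t^2/2}\) for every \(t>0\).

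For the lower bound, set \(g(t):=\Phi^c(t)-h(t)\) on \((0,\infty)\). Both \(\Phi^c(t)\) and \(h(t)\) tend to \(0\) as \(t\to\infty\), so \(g(t)\to0\). It therefore suffices to show that \(g\) is nonincreasing, since then \(g(t)\ge\lim_{s\to\infty}g(s)=0\).

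To check monotonicity, I would differentiate using \(\phi'(t)=-t\phi(t)\) and \((\Phi^c)'=-\phi\). This gives
\begin{align*}
h'(t)=\phi(t)\left(\frac{1-t^2}{(1+t^2)^2}-\frac{t^2}{1+t^2}\right).
\end{align*}
Hence
\begin{align*}
g'(t)=\phi(t)\left(-1+\frac{t^2}{1+t^2}-\frac{1-t^2}{(1+t^2)^2}\right)
=\phi(t)\left(-\frac{1+t^2}{(1+t^2)^2}-\frac{1-t^2}{(1+t^2)^2}\right)
=-\frac{2\phi(t)}{(1+t^2)^2}<0.
\end{align*}
So \(g\) is strictly decreasing, and the lower bound follows.

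The only step requiring care is this derivative computation. The key simplification is \(-1+\frac{t^2}{1+t^2}=-\frac{1}{1+t^2}\), after which the \(t^2\) terms cancel and leave the sign-definite expression above. Everything else is routine.
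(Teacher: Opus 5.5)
Your proof is correct and complete. The paper states this inequality as a standard fact and gives no proof, so your argument supplies the omitted verification in the usual way: the Chernoff bound with $\lambda=t$ gives the upper tail, and for the lower bound your computation $g'(t)=-2\phi(t)/(1+t^2)^2<0$ together with $g(t)\to 0$ as $t\to\infty$ is right.

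An equivalent way to phrase the lower bound avoids the limit argument. Integrate the identity $\frac{d}{dx}\bigl(-\phi(x)/x\bigr)=(1+x^{-2})\phi(x)$ over $[t,\infty)$ to get $\phi(t)/t$. Then use $1+x^{-2}\le 1+t^{-2}$ on that range, which gives $(1+t^{-2})\Phi^c(t)\ge \phi(t)/t$. This is the same computation in a different form.
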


We first present a lemma on upper bounding the lower tail of a Gaussian maximum. Given a vector $\gamma \in \mathbb{R}^d$ and a subset of vectors $Y=\{y_1,\ldots,y_N\}\subseteq\mathbb R^d$, define 
\begin{align}
M_Y(\gamma)&:=\max_{r\le N}\langle\gamma,y_r\rangle.
\label{eq:M_definition}
\end{align}

The following lemma is a direct consequence of the Ehrhard concavity of Gaussian
maxima together with  standard Gaussian tail estimates.

\begin{lemma}[Lower tail of a Gaussian maximum]
\label{lem:gaussian-maximum-lower-tail}
Let $\gamma \sim \mathcal{N}(0, \Id_d)$ and
\(
Y=\{y_1,\ldots,y_N\}\subseteq\mathbb{R}^d
\)
be a finite collection of vectors satisfying
$\|y_r\|_2\le 1$ for every $r\in[N]$.
Let $M_Y(\gamma)$ be defined as in~\eqref{eq:M_definition}.
Fix $t > 0$, $T>t$, and $L > 0$. Define
\(
p:=\Pr[M_Y(\gamma)>t]\) and \(
\delta:=N\Phi^c(T).
\)
If $\delta\in(0,1)$, then
\begin{align*}
\Pr[M_Y(\gamma)\le t-L]
&\le \Phi\left(\left(1+\frac{L}{T-t}\right)\Phi^{-1}(1-p)
-\frac{L}{T-t}\Phi^{-1}(1-\delta)\right).
\end{align*}
\end{lemma}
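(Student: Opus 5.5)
We apply Ehrhard's inequality \eqref{eq:ehrhard} to three sublevel sets of $M_Y$. For $u\in\R$ set $A_u:=\{\gamma: M_Y(\gamma)\le u\}=\bigcap_{r}\{\gamma:\langle\gamma,y_r\rangle\le u\}$. This is an intersection of half-spaces, hence convex and Borel. Since $M_Y$ is a maximum of linear functions, it is convex and positively homogeneous. Hence for $\theta\in[0,1]$ we have $\theta A_{u_1}+(1-\theta)A_{u_2}\subseteq A_{\theta u_1+(1-\theta)u_2}$: if $M_Y(\gamma_1)\le u_1$ and $M_Y(\gamma_2)\le u_2$, then $M_Y(\theta\gamma_1+(1-\theta)\gamma_2)\le \theta u_1+(1-\theta)u_2$.

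We write $t$ as a convex combination of $t-L$ and $T$. Let $\theta:=\frac{T-t}{T-t+L}\in(0,1)$, so that $\theta(t-L)+(1-\theta)T=t$. Ehrhard's inequality with $A=A_{t-L}$ and $B=A_T$, together with monotonicity of $\Phi^{-1}$ and the inclusion above, gives
\begin{align*}
\Phi^{-1}\bigl(\Pr[M_Y\le t]\bigr)\ge \theta\,\Phi^{-1}\bigl(\Pr[M_Y\le t-L]\bigr)+(1-\theta)\,\Phi^{-1}\bigl(\Pr[M_Y\le T]\bigr).
\end{align*}
Here $\Pr[M_Y\le t]=1-p$. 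For the second term, a union bound with $\|y_r\|\le1$ and $T>t>0$ gives $\Pr[M_Y>T]\le \sum_r \Phi^c(T/\|y_r\|)\le N\Phi^c(T)=\delta$, so $\Pr[M_Y\le T]\ge 1-\delta$ and $\Phi^{-1}(\Pr[M_Y\le T])\ge\Phi^{-1}(1-\delta)$. Solving for the first term then yields
\begin{align*}
\Phi^{-1}\bigl(\Pr[M_Y\le t-L]\bigr)\le \tfrac{1}{\theta}\Phi^{-1}(1-p)-\tfrac{1-\theta}{\theta}\Phi^{-1}(1-\delta).
\end{align*}
Since $\frac1\theta=1+\frac{L}{T-t}$ and $\frac{1-\theta}{\theta}=\frac{L}{T-t}$, applying $\Phi$ gives the claimed bound.

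The main obstacle is verifying the hypotheses $0<\gamma_d(A),\gamma_d(B)<1$ of Ehrhard's inequality; the rest is bookkeeping. If some $y_r\neq 0$, these measures lie strictly in $(0,1)$ except in degenerate cases. In those cases the bound is trivial or follows by limits/conventions $\Phi^{-1}(0)=-\infty$, $\Phi^{-1}(1)=+\infty$:
\begin{itemize}
\item If $\Pr[M_Y\le t-L]=0$, the left side is $0$ and there is nothing to prove.
\item If $p=0$, the right-hand side is $\Phi(+\infty)=1$ and the bound holds.
\item If all $y_r=0$, then $M_Y\equiv 0$. If also $t-L\ge 0$, then $p=0$, so the right side is $1$.
\item If $\Pr[M_Y\le T]=1$, replace $T$ by a slightly smaller value (or apply the inequality to $A_{T'}$ with $T'\uparrow$), using continuity.
\end{itemize}
A cleaner route is to apply Ehrhard to slightly perturbed thresholds and pass to the limit. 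We use the fact that $\delta\in(0,1)$ only to ensure that $\Phi^{-1}(1-\delta)$ is finite.
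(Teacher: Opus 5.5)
Your proposal is correct and follows essentially the same route as the paper. The paper also applies Ehrhard's inequality, phrased as concavity of $F(u)=\Phi^{-1}(\Pr[M_Y(\gamma)\le u])$, at the points $t-L<t<T$ with weight $\theta=(T-t)/(T-t+L)$, and uses the same union bound to get $F(T)\ge\Phi^{-1}(1-\delta)$. Your degenerate-case discussion is more elaborate than needed: once some $y_r\neq 0$, the only possible degeneracy is $\Pr[M_Y(\gamma)\le t-L]=0$, which is trivial, so the perturbation and limit remarks can be dropped.
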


\begin{proof}
If all $y_r=0$, then $p=0$ and the right-hand side is $1$,
so the claim is immediate. Assume henceforth that some $y_r\ne0$.
For $u\in\mathbb R$, define
\begin{align*}
K_u:=\{\gamma\in\mathbb R^d:M_Y(\gamma)\le u\},\quad
 F(u):=\Phi^{-1}\!\left(\Pr[M_Y(\gamma)\le u]\right).
\end{align*}
Each $K_u$ is convex. And $\theta K_u+(1-\theta)K_v\subseteq K_{\theta u+(1-\theta)v}$ for $u,v\in\mathbb R$ and $0<\theta<1$. Ehrhard's inequality therefore implies that $F$ is concave. Since some $y_r\ne0$ and $t,T>0$, both $F(t)$ and $F(T)$ are finite. Evaluating at $t$, we have
\begin{align*}
F(t)&=\Phi^{-1}\!\left(\Pr[M_Y(\gamma)\le t]\right)
=\Phi^{-1}(1-p).
\end{align*}
Evaluating at $T$, we have
\begin{align*}
F(T)&=\Phi^{-1}\!\left(1-\Pr[M_Y(\gamma)>T]\right) \\
&\ge \Phi^{-1}\!\left(1-\sum_{r=1}^N\Pr[\langle\gamma,y_r\rangle>T]\right) \\
&\ge \Phi^{-1}\!\left(1-N\Phi^c(T)\right)\\
&=\Phi^{-1}(1-\delta),
\end{align*}
where the first inequality is a union bound and the second uses the fact that $\|y_r\|\le 1$.
Since $t-L \leq t \leq T$, concavity of $F$ gives
\begin{align*}
F(t)
&\ge
\frac{T-t}{T-t+L}F(t-L)
+
\frac{L}{T-t+L}F(T).
\end{align*}
Then, we have
\begin{align*}
F(t-L)
&\le
\left(1+\frac{L}{T-t}\right)F(t)
-\frac{L}{T-t}F(T)\\
&\le
\left(1+\frac{L}{T-t}\right)\Phi^{-1}(1-p)
-\frac{L}{T-t}\Phi^{-1}(1-\delta).
\end{align*}
Applying $\Phi$ to both sides of the inequality gives the result.
\end{proof}

We are now ready to present the cover-composition lemma. 

\begin{lemma}[Cover composition]
\label{lem:orthogonal-cover-composition}
Fix \(c>0\) and 
let \(t\to\infty\). Let $\lambda=\lambda(t),E=E(t), $ and $U=U(t)$ be uniformly bounded, with \[0\le\lambda\le\sqrt E  \qquad \text{ and } \qquad U^2\ge 1/(1+c).\]
Let \(X\subseteq\R^d\) be a finite \((\lambda t-o(t),\exp(-o(t^2)))\)-cover satisfying \[ \sup_{x\in X}\|x\|\le 1
, \quad \log|X|\le \frac{E}{2}t^2+o(t^2).
\] 
For each \(x\in X\), let \(Y_x\subseteq x^\perp\) be a finite \((\sqrt3\,t-o(t),\exp(-o(t^2)))\)-cover satisfying \[\sup_{y\in Y_x}\|y\|\le U+o(1),\quad \log|Y_x|\le \frac{3(1+c)}{2}t^2+o(t^2),\]
uniformly over \(x\in X\). 
Then, 
\[\Pr_{\gamma\sim\mathcal N(0,\Id_d)}\!\left[
\exists x\in X,\ \exists y\in Y_x:
\begin{array}{l}
\langle\gamma,x\rangle\ge\lambda t-o(t),\\
\langle\gamma,y\rangle\ge
\left(\sqrt3-\sqrt{(E-\lambda^2)
\left(U^2-\frac1{1+c}\right)}\right)t-o(t)
\end{array}
\right]\ge \exp(-o(t^2)).\]
\end{lemma}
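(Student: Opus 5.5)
The idea is to condition on which $x\in X$ is hit. Because $Y_x\subseteq x^\perp$, the value $\langle\gamma,x\rangle$ is independent of the family $\{\langle\gamma,y\rangle\}_{y\in Y_x}$. So for each fixed $x$, the joint event "$x$ is hit but $Y_x$ fails badly" has probability equal to a product of two small numbers. A union bound over $|X|$ then shows that these bad events together carry far less mass than the event $A$ that some $x$ is hit, which has probability $\exp(-o(t^2))$ by assumption.

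Concretely, let $A=\{\exists x\in X:\langle\gamma,x\rangle\ge\lambda t-o(t)\}$, so $\Pr[A]\ge\exp(-o(t^2))$. For a parameter $L$ to be chosen, define
\[B_x=\{\langle\gamma,x\rangle\ge\lambda t-o(t)\}\cap\{M_{Y_x}(\gamma)\le \sqrt3 t-L-o(t)\}.\]
On $A\setminus\bigcup_x B_x$, the hit $x$ has some $y\in Y_x$ with $\langle\gamma,y\rangle\ge\sqrt3t-L-o(t)$, which is exactly the desired event. By independence and $\|x\|\le1$, we get $\Pr[B_x]\le \Phi^c(\lambda t-o(t))\cdot\Pr[M_{Y_x}\le\sqrt3t-L-o(t)]$. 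The first factor is $\exp(-\lambda^2t^2/2+o(t^2))$. Since $|X|\le \exp(Et^2/2+o(t^2))$, it suffices to prove
\[\Pr[M_{Y_x}\le \sqrt3 t-L-o(t)]\le\exp\!\bigl(-(E-\lambda^2)t^2/2-\eta t^2\bigr)\]
uniformly in $x$, for some fixed $\eta>0$. Then $\sum_x\Pr[B_x]\le \exp(-\eta t^2+o(t^2))=o(\Pr[A])$, and the claim follows. I would take $L=L^\ast+\eta' t$, with $L^\ast$ defined below and $\eta'>0$ small, and then let $\eta'\to0$ slowly to absorb everything into the $o(t)$ terms.

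The lower tail of $M_{Y_x}$ comes from Lemma~\ref{lem:gaussian-maximum-lower-tail}, applied to the rescaled family $Y_x/U'$ with $U'=U+o(1)$, so that all norms are at most $1$. Work in normalized units:
\begin{itemize}
\item the cover threshold becomes $\alpha t$ with $\alpha=\sqrt3/U$, and $p=\exp(-o(t^2))$, so $\Phi^{-1}(1-p)=o(t)$;
\item pick an auxiliary threshold $T=\tau t$ with $\tau^2>b:=3(1+c)$; then $\delta=|Y_x|\Phi^c(T)=\exp(-(\tau^2-b)t^2/2+o(t^2))$ and $\Phi^{-1}(1-\delta)=\sqrt{\tau^2-b}\,t(1+o(1))$;
\item the lemma gives a probability at most $\Phi\bigl(-\tfrac{L/U}{(\tau-\alpha)t}\sqrt{\tau^2-b}\,t+o(t)\bigr)$, i.e.\ an exponent $-\tfrac{(L/U)^2(\tau^2-b)}{2(\tau-\alpha)^2}$.
\end{itemize}
So we need
\[(L/U)^2\ \ge\ (E-\lambda^2)\,\min_{\tau>\sqrt b}\frac{(\tau-\alpha)^2}{\tau^2-b},\]
with strict inequality giving the slack $\eta$. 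The hypothesis $U^2\ge1/(1+c)$ is exactly $\alpha^2\le b$, and under it this minimum equals $1-\alpha^2/b$. This can be seen by Cauchy--Schwarz, or by calculus with the minimizer $\tau=b/\alpha$. Hence
\[L^\ast=Ut\sqrt{(E-\lambda^2)\bigl(1-\tfrac{1}{U^2(1+c)}\bigr)}=t\sqrt{(E-\lambda^2)\bigl(U^2-\tfrac1{1+c}\bigr)},\]
which matches the statement.

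The main obstacle is this one-variable optimization combined with the bookkeeping of the error terms. The $o(t)$ slacks in the thresholds, the $o(1)$ in $U$, and the $o(t^2)$ terms in $\log|X|$ and $\log|Y_x|$ must all be dominated by the fixed gain $\eta t^2$. Two edge cases also need care:
\begin{itemize}
\item When $\alpha^2=b$, the minimizer is $\tau=\sqrt b$ and $\delta$ is not bounded away from $1$ in exponent. Here I would simply take $L^\ast=0$ with a small positive slack.
\item When $E=\lambda^2$, no loss is needed, since $\Pr[M_{Y_x}\le\sqrt3t-\eta't]$ is already $\exp(-\Omega(t^2))$ by the same lemma.
\end{itemize}
I would also check that $\delta\in(0,1)$, which holds for large $t$, and that $t<T$ in the lemma, which holds since $\tau>\sqrt b>\alpha$ when $\alpha^2\le b$.
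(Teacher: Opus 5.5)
Your route is the paper's route. Both arguments split off $\Pr[\bigcup_x(A_x\cap B_x)]$ from $\Pr[\bigcup_x A_x]$, use independence from $Y_x\subseteq x^\perp$, take a union bound over $X$, and apply Lemma~\ref{lem:gaussian-maximum-lower-tail} to the rescaled $Y_x$. Your optimal auxiliary threshold $\tau=b/\alpha=\sqrt3(1+c)U$ is exactly the paper's $r_ts_t$. Your optimization reproduces the same $L^\ast$; the paper simply plugs this $T$ in instead of minimizing over $\tau$.

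The step that does not go through as written is uniformity near $U^2=\frac1{1+c}$. The statement lets $U=U(t)$ vary with $t$. In the application $U^2=1-y^2$, with $y$ re-chosen for each $n$ and possibly at or near $\pm\sqrt{c/(1+c)}$. So $U(t)^2-\frac1{1+c}$ may tend to $0$ without vanishing, for instance at the same rate as your $o(\cdot)$ errors. For $\tau=b/\alpha$, both $\tau-\alpha$ and $\tau^2-b$ are of order $U^2-\frac1{1+c}$. Two things then break:
\begin{itemize}
\item The coefficient $1+L/(T-t)$ in front of $\Phi^{-1}(1-p)=o(t)$ is unbounded, so that term is no longer $o(t)$.
\item The $o(t^2)$ in $\log|Y_x|$ is no longer dominated by $(\tau^2-b)t^2/2$, so $\Phi^{-1}(1-\delta)=\sqrt{\tau^2-b}\,t(1+o(1))$ is not justified.
\end{itemize}
Hence you do not get a fixed gain $\eta$, and treating only the exact case $\alpha^2=b$ does not cover this regime.

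The paper avoids the problem by replacing $U^2$ with $U^2+\sigma_t$ (and $E$ with $E+\sigma_t$), where $\sigma_t\downarrow 0$ slowly and the error rate satisfies $\xi_t=o(\sigma_t^3)$. This costs nothing: the norm hypothesis is monotone in $U$, and the target threshold moves by at most $\sqrt{(E-\lambda^2)\sigma_t}\,t=o(t)$. It also keeps $\tau-\alpha$ and $\tau^2-b$ bounded below by a multiple of $\sigma_t$, which dominates every error term. You have two other ways to close the gap within your framework:
\begin{itemize}
\item Split into the cases $U^2-\frac1{1+c}\ge\kappa$ and $U^2-\frac1{1+c}<\kappa$. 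Use your argument in the first case, and $\tau=\sqrt b+\kappa'$ in the second, as in your exact edge case, with $\kappa,\kappa'$ small depending on $\eta'$.
\item Reduce to constant parameters by passing to subsequences along which $(\lambda,E,U)$ converges.
\end{itemize}
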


\begin{proof}
Choose a sequence \(\{\xi_t\}\downarrow0\) such that all $o(1)$ terms above have absolute value at most $\xi_t$, all $o(t)$ terms above have absolute value at most $\xi_t t$, and all $o(t^2)$ terms have absolute value at most $\xi_t t^2$, simultaneously for all $x\in X$. Choose \(\{\sigma_t\}\downarrow0\) such that \(\xi_t=o(\sigma_t^3)\) and \(\sigma_t^2t^2/\log t\to\infty\). Define
\begin{align*}
\bar{E}_t:=E+\sigma_t,\quad \bar{U}_t:=\sqrt{U^2+\sigma_t},\quad  L_t:=(1+\sigma_t)\sqrt{(\bar{E}_t-\lambda^2)
\left(\bar{U}_t^2-\frac1{1+c}\right)}\,t.
\end{align*}
From boundedness of $\lambda,E$, and $U$, we know that \(L_t = \sqrt{(\bar{E}_t-\lambda^2)
\left(\bar{U}_t^2-\frac1{1+c}\right)}\,t + o(t)\). For sufficiently large $t$, we have \(\sup_{y\in Y_x}\|y\|\le\bar{U}_t\) for all $x\in X$.
For large enough $t$, \(x\in X\), and a random vector $\gamma\in \R^d$, define the following two events:
\[A_{x,t}:=\{\langle\gamma,x\rangle\ge(\lambda-\xi_t)t\},\qquad B_{x,t}:=\{M_{Y_x}(\gamma)<(\sqrt3-\xi_t)t-L_t\}.\] 
We have
\begin{equation}\label{eq:asymptotic-composition-split}
\begin{aligned}
    \Pr_{\gamma\sim\mathcal N(0,\Id_d)}&\!\left[
\exists x\in X,\ \exists y\in Y_x:
\begin{array}{l}
\langle\gamma,x\rangle\ge\lambda t-\xi_t t,\\
\langle\gamma,y\rangle\ge (\sqrt3-\xi_t)t-L_t
\end{array}
\right] \\&=\Pr\left[\bigcup_{x\in X}(A_{x,t}\cap B_{x,t}^c)\right]
\\&\ge\Pr\left[\bigcup_{x\in X}A_{x,t}\right]
-\Pr\left[\bigcup_{x\in X}(A_{x,t}\cap B_{x,t})\right]\\
&\ge\exp(-\xi_t t^2)
-\Pr\left[\bigcup_{x\in X}(A_{x,t}\cap B_{x,t})\right],
\end{aligned}
\end{equation}
where the last inequality follows because $X$ is a \((\lambda t-o(t),\exp(-o(t^2)))\)-cover. It remains to upper-bound \(\Pr\left[\bigcup_{x\in X}(A_{x,t}\cap B_{x,t})\right]\). For every \(x\in X\), define the rescaled set
\(\widetilde Y_{x,t}:=\{y/\bar{U}_t:y\in Y_x\}\). Then \(\widetilde Y_{x,t}\) consists of vectors of norm at most one.
We write
\[
 s_t:=\frac{(\sqrt3-\xi_t)t}{\bar{U}_t},\quad r_t:=\frac{\bar{U}_t^2(3(1+c)+2\xi_t)}{(\sqrt3-\xi_t)^2}.
\]
We know that \(r_t>1\), and \(|Y_x|\le\exp(r_ts_t^2/2)\).
Let \(p_{x,t}:=\Pr[M_{\widetilde Y_{x,t}}(\gamma)>s_t]\) and \(\delta_{x,t}:=|Y_x|\Phi^c(r_ts_t)\). We first upper bound \(\Phi^{-1}(1-p_{x,t})\) as follows:
 \begin{equation}\label{eq:asymptotic-px}
     \Phi^{-1}(1-p_{x,t})\leq \Phi^{-1}(1 - \exp(-\xi_t t^2)) \leq \sqrt{2\xi_t} t,
 \end{equation}
 where the first inequality follows because $Y_x$ is a \((\sqrt3\,t-o(t),\exp(-o(t^2)))\)-cover and the second follows from the upper bound in Fact~\ref{fact:tail-bound}.

 Then we lower bound \(\Phi^{-1}(1-\delta_{x,t})\). Let $\theta_t = \frac{\sigma_t}{4(1+\sigma_t)}$. Since $\bar{U}_t^2 - \frac{1}{1+c}\geq \sigma_t$ and  \(\sigma_t^2t^2/\log t\to\infty\), one can check that
 $\frac{\theta_t (r_t^2s_t^2-r_ts_t^2)}{\log t} \to \infty$ as $t\to \infty$. Then for large enough $t$,
 we have
 \begin{equation}\label{eq:asymptotic-deltax}
     \begin{aligned}
 \delta_{x,t} \leq \exp\left(\frac{r_ts_t^2-r_t^2s_t^2}{2}\right) \leq \Phi^c\left((1-\theta_t)\sqrt{r_t(r_t-1)} s_t\right),
     \end{aligned}
 \end{equation}
 where the first inequality follows from $|Y_x|\leq \exp\left(\frac{3(1+c)}{2} t^2 + \xi_t t^2\right)$ and the upper bound in Fact~\ref{fact:tail-bound}, and the second follows from the lower bound in Fact~\ref{fact:tail-bound}. Thus,
 \[
 \Phi^{-1}(1 - \delta_{x,t}) \geq (1-\theta_t)\sqrt{r_t(r_t-1)} s_t.
 \]
 We now apply Lemma~\ref{lem:gaussian-maximum-lower-tail}.
\begin{equation}
    \begin{aligned}
        \Pr[B_{x,t}] &= \Pr[M_{\widetilde Y_{x,t}}(\gamma)< s_t - \frac{L_t}{\bar{U}_t}]\\
        &\leq \Phi\left(\left(1+\frac{L_t}{\bar{U}_t (r_t-1) s_t }\right)\Phi^{-1}(1 - p_{x,t}) - \frac{L_t}{\bar{U}_t (r_t-1) s_t } \Phi^{-1}(1 - \delta_{x,t})\right) \\
        &\leq \Phi\left(\left(1+\frac{L_t}{\bar{U}_t (r_t-1) s_t }\right)\sqrt{2\xi_t} t - (1-\theta_t)\frac{L_t}{\bar{U}_t } \sqrt{\frac{r_t}{r_t-1}}\right).
    \end{aligned}
\end{equation}

We now bound the two terms in the argument of $\Phi$. First, observe that
\begin{align*}
r_t-1
&=
\frac{
\bar U_t^2(3(1+c)+2\xi_t)-(\sqrt3-\xi_t)^2
}{
(\sqrt3-\xi_t)^2
}\\
&=
\frac{
3(1+c)\left(\bar U_t^2-\frac1{1+c}\right)
+2\xi_t\bar U_t^2+2\sqrt3\,\xi_t-\xi_t^2
}{
(\sqrt3-\xi_t)^2
}\\
&= (1+c)\left(\bar U_t^2-\frac1{1+c}\right) + O(\xi_t)\\
&= (1+c)\left(\bar U_t^2-\frac1{1+c}\right)(1+o(1)).
\end{align*}
Therefore,
\begin{align}
\frac{L_t}{\bar U_t(r_t-1)s_t}
&=
O\left(
\sqrt{
\frac{\bar E_t-\lambda^2}
{\bar U_t^2-\frac1{1+c}}
}
\right).
\label{eq:Lt-ratio-order}
\end{align}

It follows from \eqref{eq:Lt-ratio-order} and the choice of $\bar{E}_t$ and $\bar{U}_t$ that
\begin{align*}
&\frac{
\left(
1+\frac{L_t}{\bar U_t(r_t-1)s_t}
\right)\sqrt{2\xi_t}\,t
}{
\sigma_t\sqrt{\bar E_t-\lambda^2}\,t
}\le
C\left(
\frac{\sqrt{\xi_t}}
{\sigma_t\sqrt{\bar E_t-\lambda^2}}
+
\frac{\sqrt{\xi_t}}
{\sigma_t\sqrt{\bar U_t^2-\frac1{1+c}}}
\right) \leq 2C\frac{\sqrt{\xi_t}}{\sigma_t^{3/2}}
\end{align*}
for some constant $C>0$ independent of $t$. 
The assumption $\xi_t=o(\sigma_t^3)$ implies that
\begin{equation}\label{eq:first-inner-term}
\left(
1+\frac{L_t}{\bar U_t(r_t-1)s_t}
\right)\sqrt{2\xi_t}\,t
=
o\left(
\sigma_t\sqrt{\bar E_t-\lambda^2}\,t
\right).
\end{equation}
We next bound the second term. By the definition of $r_t$,
\begin{align*}
\bar U_t^2\frac{r_t-1}{r_t}
=
\bar U_t^2
-
\frac{(\sqrt3-\xi_t)^2}
{3(1+c)+2\xi_t}
=
\bar U_t^2-\frac1{1+c}+O(\xi_t) =
\left(\bar U_t^2-\frac1{1+c}\right)
(1+o(\sigma_t)).
\end{align*}
Therefore,
\begin{align*}
(1-\theta_t)\frac{L_t}{\bar U_t}
\sqrt{\frac{r_t}{r_t-1}}
&=
(1-\theta_t)
(1+\sigma_t)
\sqrt{
(\bar E_t-\lambda^2)
\left(\bar U_t^2-\frac1{1+c}\right)
}\,t
\frac{1+o(\sigma_t)}
{\sqrt{\bar U_t^2-\frac1{1+c}}}\\
&=
(1-\theta_t)(1+\sigma_t)
(1+o(\sigma_t))
\sqrt{\bar E_t-\lambda^2}\,t\\
&= \left(
1+\frac{3\sigma_t}{4}+o(\sigma_t)
\right)
\sqrt{\bar E_t-\lambda^2}\,t.
\end{align*}

Combining \eqref{eq:first-inner-term} with the preceding estimate for the
second term, we obtain, for all sufficiently large \(t\),
\begin{align}
\Pr[B_{x,t}]
&\le\Phi\left(-\left(1+\frac{\sigma_t}2\right)
\sqrt{\bar{E}_t-\lambda^2}\,t\right)\notag\\
&\le\exp\left(-\frac{(1+\sigma_t/2)^2(\bar{E}_t-\lambda^2)}2t^2\right).
\label{eq:asymptotic-bx}
\end{align}

Note that all the preceding estimates hold uniformly over \(x\in X\). Since \(Y_x\subseteq x^\perp\), the events \(A_{x,t}\) and \(B_{x,t}\) are independent. Combining \eqref{eq:asymptotic-bx} with a union bound and \(\log|X|\le\bar{E}_t t^2/2+\xi_t t^2\), we obtain
\begin{align*}
\Pr\left[\bigcup_{x\in X}(A_{x,t}\cap B_{x,t})\right]
&\le\sum_{x\in X}\Pr[A_{x,t}]\Pr[B_{x,t}]\\
&\le |X| \max_{x\in X} \Pr[A_{x,t}] \max_{x\in X} \Pr[B_{x,t}]
\\
&\le\exp\left(\left[
\frac{\bar{E}_t}{2}+\xi_t-\frac{(\lambda-\xi_t)_+^2}{2}
-\frac{(1+\sigma_t/2)^2(\bar{E}_t-\lambda^2)}{2}
\right]t^2\right)\\
&\le\exp\left(-\frac{\sigma_t^2}{3}t^2\right).
\end{align*}
The last inequality follows from \(\bar{E}_t-\lambda^2\ge\sigma_t\), boundedness of \(\lambda\), and \(\xi_t=o(\sigma_t^3)\). Substituting into \eqref{eq:asymptotic-composition-split} gives
\begin{align*}
\Pr\left[\bigcup_{x\in X}(A_{x,t}\cap B_{x,t}^c)\right]
&\ge\exp(-\xi_t t^2)-\exp\left(-\frac{\sigma_t^2}{3}t^2\right)
=\exp(-o(t^2)).
\end{align*}

By definition of $L_t$, $\xi_t$, the thresholds satisfy the ones in the statement of the lemma. This completes the proof.

\end{proof}

We briefly comment on the differences between our approach and the one in~\cite{ACC06}. Our lemma takes the cardinality of the inner cover $Y_x$ into account. This enables us to obtain the high threshold $r_ts_t$ to apply Lemma~\ref{lem:gaussian-maximum-lower-tail}, which is a more refined estimate than the one in~\cite{ACC06}. 

\subsection{Certificates from Longer Walks}
\label{subsec:failure-certificate}

For the remainder of the analysis, suppose condition (i) in Lemma~\ref{lem:kms-core} fails and thus alternative (ii) holds. Fix the resulting subgraph \(G'\).
Throughout this subsection, fix \(0<c<c^\dagger\), assume
\(\Delta(G)\le n^{3(1+c)/(5+3c)}\), and choose \(t\) so that \(\Phi^c(t)=n^{-1/(5+3c)}\). Our main new ingredient is the recursive reuse of the resulting local covers. We begin with the key definition that enables us to propagate the failure certificate.

\begin{definition}[Recursive Gaussian certificate]
\label{def:recursive-gaussian-certificate}
Let \(G=(V,E)\) be a $3$-colorable graph and \(\{v_j\}_{j\in V}\) its strict vector \(3\)-coloring.
Fix a root vertex \(i\in V\). Given \(a\in(-1,1)\), a set
\(S\subseteq V\) is an \emph{\((a,s,\delta)\)-recursive Gaussian certificate
rooted at \(i\)} if:
\begin{enumerate}
    \item  $\langle v_i,v_j\rangle=a+o(1)$ uniformly over $j\in S,$
    \item The normalized directions $\left\{
        \frac{v_j-\langle v_i,v_j\rangle v_i}{\sqrt{1-\langle v_i,v_j\rangle^2}}: j\in S \right\}$ form an \((s,\delta)\)-cover.
\end{enumerate}
\end{definition}

If every \(j\in S\) is the endpoint of a length-\(\ell\) walk in \(G'\) beginning at \(i\), then the certificate is said to be a level-\(\ell\) certificate.  
Since we suppose that condition (i) in Lemma~\ref{lem:kms-core} fails, condition (ii) gives a level-\(1\) certificate. Indeed, for a fixed root \(i\in V'\), the set $S^{(1)}:=N_{G'}(i)$ forms a \((-\frac{1}{2}, \sqrt{3} t, \frac{1}{8})\)-recursive Gaussian certificate. The proof idea is to extend the certificate to higher levels to obtain a contradiction.  

We now describe the construction of the level-\(2\) certificate. Starting from \(j\in S^{(1)}\), we first extend the edge \(i\to j\) by choosing a neighbor
\(k\in N_{G'}(j)\) to produce a length-\(2\) walk $i\rightarrow j\rightarrow k.$
Since both \(i\) and \(k\) are adjacent to \(j\), we have
\begin{align*}
v_i=-\frac12v_j+\frac{\sqrt3}{2}u_{i\mid j}, \quad v_k=-\frac12v_j+\frac{\sqrt3}{2}u_{k\mid j}.
\end{align*}
We define the type of the walk \(i\to j\to k\) by
\begin{align}
y_{jk}&:=\langle u_{i\mid j},u_{k\mid j}\rangle.
\label{eq:level-two-type}
\end{align}
The type also characterizes the correlation of \(v_k\) with the root. Indeed,
\begin{align}
\langle v_i,v_k\rangle
&=\left\langle-\frac12v_j+\frac{\sqrt3}{2}u_{i\mid j},
-\frac12v_j+\frac{\sqrt3}{2}u_{k\mid j}\right\rangle\notag\\
&=\frac14+\frac34\langle u_{i\mid j},u_{k\mid j}\rangle
=\frac14+\frac34y_{jk}.
\label{eq:level-two-correlation}
\end{align}

The next lemma shows that we can construct a good level-\(2\) certificate from \(S^{(1)}\) by pruning the set of length-\(2\) walks to those whose types lie in a short interval \(J\subseteq[-1,1]\).

\begin{lemma}[The level-\(2\) certificate]
\label{lem:level-2-certificate}
Let \(t\) be the threshold in
Algorithm~\ref{alg:kms-prime} and set \(\rho_t:=(\log t)^{-2}\). 
For every fixed \(\tau>0\), there exist an interval
\(J\subseteq[-1,1]\) of width at most \(\rho_t\), a subcollection
\(\widetilde S^{(1)}\subseteq S^{(1)}\), and a number
\(y\in[-\sqrt{c/(1+c)},\sqrt{c/(1+c)}]\) such that
\begin{align}
S^{(2)}&:=\left\{k\in V':\exists j\in\widetilde S^{(1)},\quad
k\in N_{G'}(j),\quad y_{jk}\in J\right\}
\label{eq:level-two-endpoints}
\end{align}
is a \(\left( b_2(y), \bigl(\lambda_2(y)-\tau\bigr)t-o(t), \exp\bigl(-o(t^2)\bigr) \right)\)-recursive Gaussian certificate rooted at \(i\), where 
\begin{align*}
b_2(y)&:=\frac14+\frac34y, \\
\lambda_2(y)&:=\frac{\frac94-\frac34y
-\frac{\sqrt3}{2}\sqrt{(2+3c)\left(\frac{c}{1+c}-y^2\right)}}
{\sqrt{1-b_2(y)^2}}.
\end{align*}
\end{lemma}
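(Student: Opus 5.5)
The plan is to realize $S^{(2)}$ as the output of the cover composition Lemma~\ref{lem:orthogonal-cover-composition}. The outer cover comes from the root's edge directions and the inner covers from the edge directions at each $j$, with the component along $u_{i\mid j}$ projected away. Recall the decomposition
\[
v_k-\langle v_i,v_k\rangle v_i=\tfrac{\sqrt3}{4}(1-y_{jk})(-u_{j\mid i})+\tfrac{\sqrt3}{2}u_{k\mid i,j},
\]
where $u_{k\mid i,j}=u_{k\mid j}-y_{jk}u_{i\mid j}\perp\operatorname{span}\{v_i,v_j\}$. In particular $u_{k\mid i,j}\perp u_{j\mid i}$, since $u_{j\mid i}\in\operatorname{span}\{v_i,v_j\}$. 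The squared norm of the displayed vector is $\tfrac3{16}(1-y)^2+\tfrac34(1-y^2)=1-b_2(y)^2$, and $\langle v_i,v_k\rangle=b_2(y_{jk})$ by \eqref{eq:level-two-correlation}. So once all retained types lie in a common interval $J\ni y$ of width $\rho_t=o(1)$, condition~1 of Definition~\ref{def:recursive-gaussian-certificate} holds automatically with $a=b_2(y)$, and the normalizing factor is $\sqrt{1-b_2(y)^2}+o(1)$.

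\textbf{Pruning step.} Partition $[-1,1]$ into $m=\lceil 2/\rho_t\rceil=O((\log t)^2)$ intervals of width at most $\rho_t$. For each $j\in S^{(1)}$, the family $\{u_{k\mid j}\}_{k\in N_{G'}(j)}$ is a $(\sqrt3t,\tfrac18)$-cover by Lemma~\ref{lem:kms-core}. By a union bound over the $m$ pieces, there is an interval $J_j$ such that the sub-family with $y_{jk}\in J_j$ is a $(\sqrt3 t,\tfrac1{8m})$-cover. Pigeonholing $j$ by the value of $J_j$, some common $J$ captures a subset $\widetilde S^{(1)}$ whose directions $\{-u_{j\mid i}\}_{j\in\widetilde S^{(1)}}$ still form a $(\sqrt3t,\tfrac1{8m^2})$-cover. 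Here I use that $\{-u_{j\mid i}\}$ is a cover by the symmetry of $\gamma$, and the loss is another union bound over the pieces. Since $m$ is polylogarithmic in $t$, all these probabilities are $\exp(-o(t^2))$.

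For $x=-u_{j\mid i}$, set $Y_x:=\{u_{k\mid i,j}: k\in N_{G'}(j),\ y_{jk}\in J\}\subseteq x^\perp$. Applying Lemma~\ref{lem:projection} with $z=u_{i\mid j}$ and $\varepsilon=\xi t$ for a slowly vanishing $\xi$ shows that $Y_x$ is a $(\sqrt3t-o(t),\exp(-o(t^2)))$-cover. Its vectors have norm $\sqrt{1-y^2}+o(1)$. Also $|Y_x|\le\Delta\le n^{3(1+c)/(5+3c)}$, so $\log|Y_x|\le\tfrac{3(1+c)}2t^2+o(t^2)$, because $\log n=\tfrac{5+3c}2t^2+o(t^2)$ by the choice of $t$.

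\textbf{Range of $y$.} Next I must show that the chosen $y$ satisfies $|y|\le\sqrt{c/(1+c)}$ (up to $o(1)$, then clamp). Suppose instead that $1-y^2<\tfrac1{1+c}-\eta$ for a fixed $\eta>0$. A union bound gives $\Pr[\max_{Y_x}\langle\gamma,\cdot\rangle\ge\sqrt3t-o(t)]\le|Y_x|\,\Phi^c\bigl(\sqrt3 t/\sqrt{1-y^2}-o(t)\bigr)=\exp(-\Omega(t^2))$, which contradicts the cover property. Intervals touching the boundary $y^2=\tfrac{c}{1+c}$ cost only an $o(1)$ change in $U$, and this, together with the fixed slack $\tau$, absorbs any boundary effect. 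In every case $U^2:=1-y^2\ge\tfrac1{1+c}-o(1)$, as Lemma~\ref{lem:orthogonal-cover-composition} requires, and $U^2-\tfrac1{1+c}=\tfrac c{1+c}-y^2$.

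\textbf{Composition.} Apply Lemma~\ref{lem:orthogonal-cover-composition} with $\lambda=\sqrt3$, $U=\sqrt{1-y^2}$, and $E=5+3c$, using the crude bound $|X|\le n$. (The bound $|X|\le\Delta$ would be stronger, but matching the stated $\lambda_2$ only needs $E-\lambda^2=2+3c$.) With probability $\exp(-o(t^2))$ there exist $j,k$ with
\[
\langle\gamma,-u_{j\mid i}\rangle\ge\sqrt3t-o(t),\qquad
\langle\gamma,u_{k\mid i,j}\rangle\ge\Bigl(\sqrt3-\sqrt{(2+3c)\bigl(\tfrac c{1+c}-y^2\bigr)}\Bigr)t-o(t).
\]
Since $1-y_{jk}\ge0$, I can weight these by $\tfrac{\sqrt3}4(1-y_{jk})$ and $\tfrac{\sqrt3}2$ and replace $y_{jk}$ by $y$ at cost $O(\rho_t t)=o(t)$. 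This gives $\langle\gamma,v_k-\langle v_i,v_k\rangle v_i\rangle\ge\bigl(\tfrac94-\tfrac34y-\tfrac{\sqrt3}2\sqrt{(2+3c)(\tfrac c{1+c}-y^2)}\bigr)t-o(t)$. Dividing by $\sqrt{1-b_2(y)^2}+o(1)$ yields threshold $\lambda_2(y)t-o(t)\ge(\lambda_2(y)-\tau)t-o(t)$.

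\textbf{Main obstacle.} I expect the delicate step to be the simultaneous bookkeeping. A single $J$ must work for all retained $j$ while keeping $\exp(-o(t^2))$ probabilities. The $o(t)$ projection and type-rounding errors must be uniform over $x\in X$, as Lemma~\ref{lem:orthogonal-cover-composition} demands. Finally, the boundary case for $y$ must be handled so that the hypothesis $U^2\ge1/(1+c)$ holds exactly; this is where the fixed slack $\tau$ will be spent.
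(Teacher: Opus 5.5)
Your proposal is correct and follows the paper's proof essentially step for step. The shared steps are: interval pruning of the types $y_{jk}$ via two union bounds; projection of $u_{k\mid j}$ away from $u_{i\mid j}$ via Lemma~\ref{lem:projection}; the degree-based union bound forcing $y^2\le c/(1+c)+o(1)$, followed by clamping; and Lemma~\ref{lem:orthogonal-cover-composition} with $\lambda=\sqrt3$, $E=5+3c$, $U^2=1-y^2$.

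Two minor points. The outer family loses only one factor of $m$, giving a $(\sqrt3 t,\tfrac1{8m})$-cover rather than $\tfrac1{8m^2}$ (your bound is weaker but still valid). Also, once $y$ is clamped the hypothesis $U^2\ge 1/(1+c)$ holds exactly, so the slack $\tau$ is not needed for the boundary case; the paper uses it only as a trivial final lowering of the threshold.
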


\begin{proof}
Partition \([-1,1]\) into consecutive intervals \(\mathcal J_t\), each of
width at most \(\rho_t\). Therefore, $|\mathcal J_t|\le\frac3{\rho_t}.$ For \(j\in S^{(1)}\) and \(I\in\mathcal J_t\), let \(N_j(I) := \left\{ k\in N_{G'}(j):y_{jk}\in I \right\}\). The condition (ii) in Lemma~\ref{lem:kms-core} gives that
\begin{align*}
\Pr_\gamma\!\left[\max_{k\in N_{G'}(j)}\langle\gamma,u_{k\mid j}\rangle\ge\sqrt3\,t\right]
&\ge\frac18.
\end{align*}
Since $N_{G'}(j)
=
\bigcup_{I\in\mathcal J_t}N_j(I),$ the union bound and the upper bound \(\lvert\mathcal J_t\rvert\le \frac{3}{\rho_t}\) imply
\begin{align*}
\frac18
&\le
\Pr_\gamma\!\left[
\max_{k\in N_{G'}(j)}
\langle\gamma,u_{k\mid j}\rangle
\ge \sqrt3\,t
\right]\\
&\le
\sum_{I\in\mathcal J_t}
\Pr_\gamma\!\left[
\max_{k\in N_j(I)}
\langle\gamma,u_{k\mid j}\rangle
\ge \sqrt3\,t
\right]\\
&\le
\frac{3}{\rho_t}
\max_{I\in\mathcal J_t}
\Pr_\gamma\!\left[
\max_{k\in N_j(I)}
\langle\gamma,u_{k\mid j}\rangle
\ge \sqrt3\,t
\right].
\end{align*}
Consequently, there exists an interval
\(J_j\in\mathcal J_t\) such that
\begin{align*}
\Pr_\gamma\!\left[\max_{k\in N_j(J_j)}\langle\gamma,u_{k\mid j}\rangle\ge\sqrt3\,t\right]
&\ge\frac{\rho_t}{24}.
\end{align*}

Then we group the vertices $j \in S^{(1)}$ according to their selected interval. For each \(I\in\mathcal J_t\), define \(S^{(1)}(I) := \left\{ j\in S^{(1)}:J_j=I \right\}\). By condition (ii) in Lemma~\ref{lem:kms-core} and a union
bound, we have
\begin{align*}
\frac18
&\le
\Pr_\gamma\!\left[
\max_{j\in S^{(1)}}
\langle\gamma,u_{j\mid i}\rangle
\ge \sqrt3\,t
\right]\\
&\le
\sum_{I\in\mathcal J_t}
\Pr_\gamma\!\left[
\max_{j\in S^{(1)}(I)}
\langle\gamma,u_{j\mid i}\rangle
\ge \sqrt3\,t
\right]\\
&\le
\frac3{\rho_t}
\max_{I\in\mathcal J_t}
\Pr_\gamma\!\left[
\max_{j\in S^{(1)}(I)}
\langle\gamma,u_{j\mid i}\rangle
\ge \sqrt3\,t
\right].
\end{align*}
Consequently, there exists \(J\in\mathcal J_t\) such that
\begin{equation}\label{eq:tilde-S-1-cover-property}
\Pr_\gamma\!\left[\max_{j\in S^{(1)}(J)}\langle\gamma,u_{j\mid i}\rangle\ge\sqrt3\,t\right]
\ge\frac{\rho_t}{24}.
\end{equation}
Set $\widetilde S^{(1)}:=S^{(1)}(J)$. Note that~\eqref{eq:tilde-S-1-cover-property} shows that $\left\{
u_{j\mid i}:j\in\widetilde S^{(1)}
\right\}$ is a
\(\left(\sqrt3\,t,\frac{\rho_t}{24}\right)\)-cover, and by construction, every \(j\in\widetilde S^{(1)}\) has the same selected interval \(J_j=J\).
With this choice of \(J\), we next show that \(S^{(2)}\) is the claimed recursive Gaussian certificate.

Let \(y\) be the midpoint of the interval \(J\). For every \(k\in S^{(2)}\), there exists \(j\in\widetilde S^{(1)}\) such that \(k\in N_j(J)\), and hence \(y_{jk}\in J\). Consequently, \(y_{jk}=y+o(1)\) and
\begin{align*}
\langle v_i,v_k\rangle&=\frac14+\frac34y_{jk}=b_2(y)+o(1).
\end{align*}
It remains to show that $S^{(2)}$ is a $\left(\bigl(\lambda_2(y)-\tau\bigr)t-o(t), \exp\bigl(-o(t^2)\bigr) \right)$-cover. For every \(k\in S^{(2)}\), let \(j\in\widetilde S^{(1)}\) be such that \(k\in N_j(J)\), and consider \(u_{k\mid i,j} := u_{k\mid j}-y_{jk}u_{i\mid j}\). By definition, $
u_{k\mid i,j}\perp u_{i\mid j},$
$u_{k\mid i,j}\perp u_{j\mid i}$, and
$\|u_{k\mid i,j}\|^2= 1- y_{jk}^2 = 1-y^2+o(1).$
Since \(u_{k\mid i,j}\) is obtained by removing the
\(u_{i\mid j}\)-component of \(u_{k\mid j}\),
Lemma~\ref{lem:projection}, applied to the family
\(\{u_{k\mid j}:k\in N_j(J)\}\) with
\(\varepsilon=\Phi^{-1}(1-\rho_t/96)=o(t)\), gives that $\left\{
u_{k\mid i,j}:k\in N_j(J)
\right\}$ form a $\left(
\sqrt3\,t-o(t),
\frac{\rho_t}{48}
\right)$-cover.
Thus, for every
\(j\in\widetilde S^{(1)}\), we have
\begin{equation}\label{eq:union-bound-2-level}
    \begin{aligned}
        \frac{\rho_t}{48}
&\le
\Pr_{\gamma}\left[
\max_{k\in N_j(J)}
\langle \gamma,u_{k\mid i,j}\rangle
\ge \sqrt{3}\,t-o(t)
\right]
\\
&\le
\Delta(G)
\exp\left(
-\frac{(\sqrt{3}\,t-o(t))^2}
{2(1-y^2+o(1))}
\right),
    \end{aligned}
\end{equation}
where the last inequality follows from the union bound and the upper bound in Fact~\ref{fact:tail-bound}. We know that
\begin{equation}\label{eq:log-delta-bound}
    \begin{aligned}
\log \Delta(G)
&\leq \frac{3(1+c)}{5+3c}\log n \\
&\leq \frac{3(1+c)}{5+3c}
\left(\frac{5+3c}{2}t^2 + o(t^2)\right) \\
&= \frac{3(1+c)}{2}t^2 + o(t^2),
\end{aligned}
\end{equation}
where the second line follows from the choice of \(t\) such that \(\Phi^c(t)=n^{-1/(5+3c)}\). Then, substituting \eqref{eq:log-delta-bound} into \eqref{eq:union-bound-2-level} and taking logarithms, we have
\begin{align*}
    \frac{3(1+c)}2-\frac{3}{2(1-y^2)}+o(1) \geq 0.
\end{align*}
Hence, \(y^2\le\frac{c}{1+c}+o(1)\). Projecting $y$ onto $\left[
-\sqrt{\frac{c}{1+c}},
\sqrt{\frac{c}{1+c}}
\right]$ only introduces \(o(1)\) error, thus we can assume that \(y\in \left[ -\sqrt{\frac{c}{1+c}}, \sqrt{\frac{c}{1+c}} \right]\) while preserving $y_{jk}=y+o(1)$ for all \(k\in S^{(2)}\).

Now, by direct substitution, we have
\begin{align*}
v_k&=\left(\frac14+\frac34y_{jk}\right)v_i
-\frac{\sqrt3}{4}(1-y_{jk})u_{j\mid i}
+\frac{\sqrt3}{2}u_{k\mid i,j},
\end{align*}
which gives
\[
v_k-\langle v_i,v_k\rangle v_i
=
-\frac{\sqrt{3}}{4}(1-y_{jk})u_{j\mid i}
+
\frac{\sqrt{3}}{2}u_{k\mid i,j}.
\]
By symmetry, $
\left\{
-u_{j\mid i}:j\in\widetilde S^{(1)}
\right\}$ form a $\left(
\sqrt3\,t-o(t),
\exp(-o(t^2))
\right)$-cover. As established previously, $\left\{
u_{k\mid i,j}:k\in N_j(J)
\right\}$ form a $\left(
\sqrt3\,t-o(t),
\frac{\rho_t}{48}
\right)$-cover, which is also a $\left(
\sqrt3\,t-o(t), \exp(-o(t^2))
\right)$-cover. We also have $\|u_{j\mid i}\| = 1$ and $\|u_{k\mid i,j}\|^2 = 1-y^2+o(1)$. 
Moreover,
\begin{align*}
\log|\widetilde S^{(1)}|\le \log n\leq \frac{5+3c}{2}t^2+o(t^2),
\quad \log|N_j(J)|\le \log \Delta(G) \le \frac{3(1+c)}2t^2+o(t^2).
\end{align*} 
And since \(y\in[-\sqrt{c/(1+c)},\sqrt{c/(1+c)}]\), we have
\begin{align*}
E-\lambda^2=2+3c>0,\quad U^2-\frac1{1+c}=\frac{c}{1+c}-y^2\ge0.
\end{align*}
We now apply Lemma~\ref{lem:orthogonal-cover-composition}, associating
\(x=-u_{j\mid i}\) with
\(Y_x=\{u_{k\mid i,j}:k\in N_j(J)\}\), and using parameters
\(\lambda=\sqrt3\), \(E=5+3c\), and \(U^2=1-y^2\). With probability at least
\(
\exp\bigl(-o(t^2)\bigr),
\)
there is some retained pair $(j,k)$ such that simultaneously
\[
\langle \gamma, -u_{j\mid i}\rangle \geq \sqrt{3}\,t-o(t),\quad 
\langle \gamma, u_{k\mid i,j}\rangle
\geq
\left(
\sqrt{3}
-
\sqrt{(2+3c)
\left(\frac{c}{1+c}-y^2\right)}
\right)t-o(t).
\]
Under this event, we have
\[
\left\langle \gamma, v_k-\langle v_i,v_k\rangle v_i \right\rangle
\geq
\left[
\frac{9}{4}
-\frac{3}{4}y
-\frac{\sqrt{3}}{2}
\sqrt{(2+3c)\left(\frac{c}{1+c}-y^2\right)}
\right]t-o(t).
\]
Therefore, lowering the threshold by \(\tau t\) gives
\begin{align*}
\Pr_\gamma\left[\max_{k\in S^{(2)}}\left\langle\gamma,
\frac{v_k-\langle v_i,v_k\rangle v_i}{\sqrt{1-\langle v_i,v_k\rangle^2}}
\right\rangle\ge\bigl(\lambda_2(y)-\tau\bigr)t-o(t)\right]
&\ge\exp(-o(t^2)).
\end{align*}
Thus, \(S^{(2)}\) is the claimed recursive Gaussian certificate.
\end{proof}

The same construction extends a level-\(\ell\) certificate to level
\(\ell+1\) by appending one edge to each retained walk. Suppose the root is $i$ and the level-\(\ell\) certificate set is \(S^{(\ell)}\). For each \(j\in S^{(\ell)}\), we choose a neighbor \(k\in N_{G'}(j)\) to produce a length-\((\ell+1)\) walk $i\rightarrow \cdots \rightarrow j\rightarrow k.$ We define the \emph{type} of the walk \(i\to \cdots \to j\to k\) by \[y_{jk}:=\left\langle u_{k\mid j},\frac{v_i-\langle v_i,v_j\rangle v_j }{\sqrt{1 - \langle v_i,v_j\rangle^2}}\right\rangle.\]
The type characterizes how the correlation with the root $v_i$ changes as we extend the walk. Direct computation gives
\begin{align*}
\langle v_i,v_k\rangle
&=
-\frac{\langle v_i,v_j\rangle}{2}
+\frac{\sqrt3}{2}\sqrt{1-\langle v_i,v_j\rangle^2}\,y_{jk}.
\end{align*}
Define the following functions \(A(a,y)\) and \(B(a,y)\) for \(a\in(-1,1)\) and
\(y\in[-1,1]\):
\begin{align}
B(a,y)&:=-\frac a2+\frac{\sqrt3}{2}\sqrt{1-a^2}\,y,\\
A(a,y)&:=\frac12\sqrt{1-a^2}+\frac{\sqrt3}{2}ay.
\label{eq:BA}
\end{align}
If \(a=\langle v_i,v_j\rangle\) and \(y=y_{jk}\), then 
\[
v_k
=
B(a,y)v_i
-
A(a,y)\left(
\frac{v_j-\langle v_i,v_j\rangle v_i}
{\sqrt{1-\langle v_i,v_j\rangle^2}}\right)
+
\frac{\sqrt{3}}{2}\left(u_{k\mid j}
-
y_{jk}
\frac{v_i-\langle v_i,v_j\rangle v_j}
{\sqrt{1-\langle v_i,v_j\rangle^2}}\right)
\]
Recall that for level-\(1\) certificate, these reduce to
\(B(-1/2,y)=b_2(y)\) and \(A(-1/2,y)=\frac{\sqrt3}{4}(1-y)\).
For extending the level-\(\ell\) certificate, set \(L:=\frac12\sqrt{(1+4c)/(1+c)}\) and
\begin{align}
\nu_0(c)&:=\min\left\{
\frac{\sqrt3}{4(1+c)}\left(\sqrt{1+c}-\sqrt{c(1+4c)}\right),
\frac32-\frac{\sqrt3}{2}\sqrt{\frac{c(5+3c)}{1+c}}
\right\}>0.
\label{eq:invariant-constants}
\end{align}
The positivity follows from \(0<c<c^\dagger<1/2\). 

Now we are ready to state the main lemma that extends a level-\(\ell\) certificate to level \(\ell+1\).
\begin{lemma}[Extending a recursive Gaussian certificate]
\label{lem:recursive-step}
Fix \(0<c<c^\dagger\) and \(0<\tau<\nu_0(c)/2\). Let
\(S^{(\ell)}\) be a level-\(\ell\) recursive Gaussian certificate rooted
at \(i\) with parameters \(\bigl(a,\lambda t-o(t),\exp(-o(t^2))\bigr)\),
where \(|a|\le L\) and \(0\le\lambda\le\sqrt{5+3c}\).

Then, for some \(y\in[-\sqrt{c/(1+c)},\sqrt{c/(1+c)}]\), there exists a level-\((\ell+1)\) recursive Gaussian certificate \(S^{(\ell+1)}\) rooted at \(i\) with parameters
\(\bigl(a_{\mathrm{new}},\lambda_{\mathrm{new}}t-o(t),\exp(-o(t^2))\bigr)\), where
\begin{align*}
a_{\mathrm{new}}&=B(a,y),\\
\lambda_{\mathrm{new}}
&=\frac{\lambda A(a,y)+\frac32
-\frac{\sqrt3}{2}\sqrt{(5+3c-\lambda^2)\left(\frac{c}{1+c}-y^2\right)}}
{\sqrt{1-B(a,y)^2}}-\tau.
\end{align*}
\end{lemma}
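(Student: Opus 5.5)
We follow the template of Lemma~\ref{lem:level-2-certificate}, replacing the fixed level-\(1\) data \((-\tfrac12,\sqrt3)\) by general \((a,\lambda)\). For \(j\in S^{(\ell)}\) write
\[
w_j:=\frac{v_j-\langle v_i,v_j\rangle v_i}{\sqrt{1-\langle v_i,v_j\rangle^2}},\qquad
z_j:=\frac{v_i-\langle v_i,v_j\rangle v_j}{\sqrt{1-\langle v_i,v_j\rangle^2}},
\]
and for \(k\in N_{G'}(j)\) set \(u_{k\mid i,j}:=u_{k\mid j}-y_{jk}z_j\). Note that \(z_j\) is a unit vector in \(v_j^\perp\cap\operatorname{span}\{v_i,v_j\}\). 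Since \(u_{k\mid j}\perp v_j\), the vector \(u_{k\mid i,j}\) is orthogonal to \(\operatorname{span}\{v_i,v_j\}\), hence to \(w_j\), and \(\|u_{k\mid i,j}\|^2=1-y_{jk}^2\). The displayed decomposition of \(v_k\) then gives
\[
v_k-\langle v_i,v_k\rangle v_i=-A(a_j,y_{jk})\,w_j+\tfrac{\sqrt3}{2}u_{k\mid i,j},\qquad a_j=\langle v_i,v_j\rangle=a+o(1).
\]

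\emph{Step 1 (pruning by type).} Partition \([-1,1]\) into at most \(3/\rho_t\) intervals of width \(\rho_t=(\log t)^{-2}\). For each \(j\in S^{(\ell)}\), the cover \(\{u_{k\mid j}\}_{k\in N_{G'}(j)}\) from Lemma~\ref{lem:kms-core}(ii) yields an interval \(J_j\) with \(\{u_{k\mid j}:y_{jk}\in J_j\}\) a \((\sqrt3t,\rho_t/24)\)-cover. Group the \(j\)'s by \(J_j\) and apply a union bound to the cover \(\{w_j\}\), which has probability \(\exp(-o(t^2))\). This produces a common \(J\) and a subfamily \(\widetilde S^{(\ell)}\) for which \(\{w_j\}_{j\in\widetilde S^{(\ell)}}\) is a \((\lambda t-o(t),\exp(-o(t^2))\rho_t/24)\)-cover; the factor \(\rho_t\) is still \(\exp(-o(t^2))\). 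Let \(y\) be the midpoint of \(J\) and define \(S^{(\ell+1)}\) as in \eqref{eq:level-two-endpoints}. Then \(\langle v_i,v_k\rangle=B(a_j,y_{jk})=B(a,y)+o(1)\) by continuity of \(B\), using \(|a|\le L<1\).

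\emph{Step 2 (the inner covers and the range of \(y\)).} Apply Lemma~\ref{lem:projection} with \(z=z_j\) and \(\varepsilon=\Phi^{-1}(1-\rho_t/96)=o(t)\). This shows that \(\{u_{k\mid i,j}\}_{k\in N_j(J)}\) is a \((\sqrt3t-o(t),\rho_t/48)\)-cover. Combining this with the union bound against \(\Delta(G)\) exactly as in \eqref{eq:union-bound-2-level}–\eqref{eq:log-delta-bound} forces \(y^2\le c/(1+c)+o(1)\), so after an \(o(1)\) adjustment we may take \(y\in[-\sqrt{c/(1+c)},\sqrt{c/(1+c)}]\).

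\emph{Step 3 (composition).} Apply Lemma~\ref{lem:orthogonal-cover-composition} with \(x=w_j\), \(Y_x=\{u_{k\mid i,j}\}\subseteq x^\perp\), \(E=5+3c\) (since \(|\widetilde S^{(\ell)}|\le n\)), \(U^2=1-y^2\ge1/(1+c)\), and the given \(\lambda\le\sqrt{5+3c}\). This yields, with probability \(\exp(-o(t^2))\), a pair \((j,k)\) satisfying \(\langle\gamma,w_j\rangle\ge\lambda t-o(t)\) together with
\[
\langle\gamma,u_{k\mid i,j}\rangle\ge\Bigl(\sqrt3-\sqrt{(5+3c-\lambda^2)(\tfrac{c}{1+c}-y^2)}\Bigr)t-o(t).
\]
The main obstacle is the sign of the \(w_j\) term. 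We need \(\langle\gamma,-A\,w_j\rangle\) to be large, whereas the cover controls \(+w_j\); in level \(2\) this was resolved by the symmetry \(\gamma\mapsto-\gamma\) applied to a single cover. We will therefore apply the composition to the family \(\{-w_j\}\), which is a cover by symmetry, paired with \(Y_x\); the composition lemma only needs \(Y_x\perp x\), which still holds. It is equally necessary that \(A(a,y)\ge0\) (indeed bounded below) on the allowed range \(|a|\le L\), \(|y|\le\sqrt{c/(1+c)}\). At the extreme \(a=-L\) we have \(A=\tfrac12\sqrt{1-L^2}-\tfrac{\sqrt3}{2}L|y|\), and we expect the first expression in \(\nu_0(c)\) to be exactly the verification that this quantity stays positive. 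Given that, the displayed decomposition yields
\[
\langle\gamma,v_k-\langle v_i,v_k\rangle v_i\rangle\ge\bigl(\lambda A(a,y)+\tfrac32-\tfrac{\sqrt3}{2}\sqrt{\cdots}\bigr)t-o(t),
\]
after absorbing the \(o(1)\) perturbations of \(A\) and \(\|u_{k\mid i,j}\|\) into \(o(t)\). The bracket is at least \(\nu_0\) by the second expression in \(\nu_0(c)\), so it is positive and the normalization is legitimate. Finally, divide by \(\sqrt{1-\langle v_i,v_k\rangle^2}=\sqrt{1-B(a,y)^2}+o(1)\), which is bounded away from \(0\), and lower the threshold by \(\tau t\) to absorb the residual errors. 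This gives the claimed parameters \((B(a,y),\lambda_{\mathrm{new}}t-o(t),\exp(-o(t^2)))\).
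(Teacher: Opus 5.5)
Your proposal is correct and follows the paper's proof essentially step for step. It uses the same pruning of types into intervals of width $\rho_t=(\log t)^{-2}$, the same projection of $u_{k\mid j}$ away from $z_j$, and the same degree-bound argument forcing $y^2\le c/(1+c)+o(1)$. It then applies Lemma~\ref{lem:orthogonal-cover-composition} to the symmetric outer family $\{-w_j\}$ with $E=5+3c$, $U^2=1-y^2$, and gets positivity of $A(a,y)$ from the first term of $\nu_0(c)$.

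The paper does a few small things more explicitly. When several $j$ share an outer direction, it picks one representative $j(x)$; your indexed application treats $X$ as a multiset, which that lemma's proof tolerates. It also checks $|B(a,y)|\le L<1$ directly. Finally, positivity of the numerator bracket is not needed for the normalization step.
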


Observe that taking \(a=-1/2\) and \(\lambda=\sqrt3\) gives
\(a_{\mathrm{new}}=b_2(y)\) and \(\lambda_{\mathrm{new}}=\lambda_2(y)-\tau\), recovering the transition in Lemma~\ref{lem:level-2-certificate}. The proof of Lemma \ref{lem:recursive-step} follows from the same interval selection, projection, and composition steps as in Lemma \ref{lem:level-2-certificate}. We defer the proof to Appendix~\ref{app:recursive-step}.

\section{The Potential Function}\label{sec:potential-function}

Suppose that the KMS\(^{\prime}\) failure subgraph exists. Starting from
the level-\(1\) certificate, Lemma~\ref{lem:recursive-step} then constructs
certificates at successively higher walk levels. A level-\(\ell\)
certificate with parameters \((a,\lambda)\) has vertices with correlation \(a+o(1)\) with the root and their normalized directions satisfy
\[
\Pr_\gamma\left[
\max_{j\in S^{(\ell)}}
\left\langle
\gamma,
\frac{v_j-\langle v_i,v_j\rangle v_i}
{\sqrt{1-\langle v_i,v_j\rangle^2}}
\right\rangle
\ge\lambda t-o(t)
\right]
\ge\exp(-o(t^2)).
\]
Every certificate contains at most \(n\) distinct unit vectors. Consequently, a union bound gives
\begin{align*}
\exp(-o(t^2))
&\le
\Pr_\gamma\left[
\max_{j\in S^{(\ell)}}
\left\langle
\gamma,
\frac{v_j-\langle v_i,v_j\rangle v_i}
{\sqrt{1-\langle v_i,v_j\rangle^2}}
\right\rangle
\ge\lambda t-o(t)
\right]\\
&\le
n\Phi^c(\lambda t-o(t))\\
&=
\exp\left(
\frac{5 + 3c -\lambda^2}{2}t^2+o(t^2)
\right).
\end{align*}
Thus it suffices to find a certificate with \(\lambda>\sqrt{5 + 3c} +\Omega(1)\). We achieve this by exhibiting a potential function with uniform increase at each step. Recall that each recursive Gaussian certificate is tracked by  $(a,s,\delta)$. Writing $s=\lambda t-o(t),$
the potential function we use is
\[
V(a,\lambda)
:=
\lambda
-
\frac{3(1+c)}{1+4c}
\left(
\frac{1}{\sqrt{1-a^2}}-1
\right).
\]
The asymptotic form of the probability parameter \(\delta=\exp(-o(t^2))\) is preserved throughout the recursion and therefore does not enter the potential explicitly. In Subsection~\ref{subsec:properties}, we will motivate the function and discuss its derivation. We will then use it to complete the proof of Theorem \ref{thm:sparse-progress}.

\subsection{Motivating the function} \label{subsec:properties}

In an ideal path of certificates, we would like the threshold coefficient \(\lambda\) to increase by $\Omega(1)$ at each step. We examine why this direct approach fails. Let \(r_{jk}\) denote the orthogonal projection of \(u_{k\mid j}\) onto \(\operatorname{span}\{v_i,v_j\}^{\perp}\). For the purpose of this subsection, we shall suppress the little-\(o\) errors and the fixed coefficient loss \(\tau\) in Lemma~\ref{lem:recursive-step}. Consider the ideal state transition $(a,\lambda)
\longmapsto
\bigl(B(a,y),\lambda_{\mathrm{new}}\bigr)$
with the new endpoint direction satisfying
\begin{equation}
v_k-\langle v_i,v_k\rangle v_i
=
-A(a,y)x_j+\frac{\sqrt3}{2}r_{jk}+o(1),
\end{equation}
and therefore
\begin{align}
\lambda_{\mathrm{new}}-\lambda
&=
\frac{
\lambda A(a,y)+\frac32
-\frac{\sqrt3}{2}
\sqrt{(5+3c-\lambda^2)\left(\frac{c}{1+c}-y^2\right)}
}{
\sqrt{1-B(a,y)^2}
}
-\lambda
\notag\\
&=
\frac{
\frac32
-
\left(
\sqrt{1-B(a,y)^2}-A(a,y)
\right)\lambda
-
\frac{\sqrt3}{2}
\sqrt{(5+3c-\lambda^2)\left(\frac{c}{1+c}-y^2\right)}
}{
\sqrt{1-B(a,y)^2}
}.
\label{eq:lambda-difference}
\end{align}

The preceding identity shows why \(\lambda\) need not increase: the positive contribution $\tfrac32$ can be offset by both the composition loss and the change in normalization. We therefore look for a correction depending on the correlation and consider the potential function
\[
V_\varphi(a,\lambda):=\lambda-\varphi(a).
\]
Writing 
\[a_{\mathrm{new}}
=
B(a,y)
=
-\frac a2
+
\frac{\sqrt3}{2}\sqrt{1-a^2}\,y, \]
define, for an arbitrary function \(\varphi\),
\begin{align*}
H_\varphi(a,a_{\mathrm{new}})
&:=
\frac32
+
\sqrt{1-a_{\mathrm{new}}^2}
\left(
\varphi(a)-\varphi(a_{\mathrm{new}})
\right),\\
W(a,a_{\mathrm{new}})
&:=
\left(
\sqrt{1-a_{\mathrm{new}}^2}-A(a,y)
\right)^2
+
\frac34\left(\frac{c}{1+c}-y^2\right).
\end{align*}
The potential change satisfies
\begin{align*}
\sqrt{1-a_{\mathrm{new}}^2}\cdot
\left(
V_\varphi(a_{\mathrm{new}},\lambda_{\mathrm{new}})
-
V_\varphi(a,\lambda)
\right)
&=
\sqrt{1-a_{\mathrm{new}}^2}
\left(
\lambda_{\mathrm{new}}-\lambda
\right)
+
\sqrt{1-a_{\mathrm{new}}^2}
\left(
\varphi(a)-\varphi(a_{\mathrm{new}})
\right)\\
&=
\frac32
-
\left(
\sqrt{1-a_{\mathrm{new}}^2}-A(a,y)
\right)\lambda\\
&\qquad\quad
-
\frac{\sqrt3}{2}
\sqrt{\frac{c}{1+c}-y^2}\sqrt{5+3c-\lambda^2}\\
&\qquad\quad
+
\sqrt{1-a_{\mathrm{new}}^2}
\left(
\varphi(a)-\varphi(a_{\mathrm{new}})
\right)\\
&\ge
H_\varphi(a,a_{\mathrm{new}})
-
\sqrt{(5+3c)W(a,a_{\mathrm{new}})}
\end{align*}
where the inequality follows from Cauchy--Schwarz.
Thus we first seek a function \(\varphi\) satisfying
\[
H_\varphi(a,a_{\mathrm{new}})
\ge
\sqrt{(5+3c)W(a,a_{\mathrm{new}})}
\]
for all possible transitions.

To guess such a function \(\varphi\), we examine the worst case change in the correlation: the two-cycle given by $a_{\mathrm{new}}=-a$ (which is feasible when $a^2\le3c/(1+4c)$). Along this correlation pattern, direct substitution gives $W=\frac{1+4c}{4(1+c)}.$
Moreover, because the two directions \(a\to-a\) and \(-a\to a\) are symmetric, we seek $\varphi(a)=\varphi(-a)$, which gives $H_\varphi(a,-a)=\frac32.$ Therefore, on this correlation pattern, we have:
\[
H_\varphi(a,-a)^2
=
\frac{9(1+c)}{1+4c}W(a,-a).
\]
Accordingly, define 
\[
Q_\varphi(a,a_{\mathrm{new}})
:=
H_\varphi(a,a_{\mathrm{new}})^2
-
\frac{9(1+c)}{1+4c}W(a,a_{\mathrm{new}}).
\]
Since we seek \(Q_\varphi(a,a_{\mathrm{new}})\ge0\) for every feasible transition, with equality at \(a_{\mathrm{new}}=-a\), we use the first-order stationarity condition here to guess a candidate function $\varphi$. By a direct computation, this gives the condition

\begin{align*}
0
&=
\left.
\partial_{a_{\mathrm{new}}}
Q_\varphi(a,a_{\mathrm{new}})
\right|_{a_{\mathrm{new}}=-a}
\\
&=
\left.
\partial_{a_{\mathrm{new}}}
\left(
H_\varphi(a,a_{\mathrm{new}})^2
-
\frac{9(1+c)}{1+4c}
W(a,a_{\mathrm{new}})
\right)
\right|_{a_{\mathrm{new}}=-a}
\\
&=
2H_\varphi(a,-a)
\left.
\partial_{a_{\mathrm{new}}}
H_\varphi(a,a_{\mathrm{new}})
\right|_{a_{\mathrm{new}}=-a}
-
\frac{9(1+c)}{1+4c}
\left.
\partial_{a_{\mathrm{new}}}
W(a,a_{\mathrm{new}})
\right|_{a_{\mathrm{new}}=-a}
\\
&=
3
\left.
\partial_{a_{\mathrm{new}}}
H_\varphi(a,a_{\mathrm{new}})
\right|_{a_{\mathrm{new}}=-a}
-
\frac{9(1+c)}{1+4c}
\left.
\partial_{a_{\mathrm{new}}}
W(a,a_{\mathrm{new}})
\right|_{a_{\mathrm{new}}=-a}
\\
&=
3\sqrt{1-a^2}\,\varphi'(a)
-
\frac{9(1+c)}{1+4c}
\frac{a}{1-a^2}.
\end{align*}
Rearranging gives
\[
\varphi'(a)
=
\frac{3(1+c)}{1+4c}
\frac{a}{(1-a^2)^{3/2}}.
\]
With the normalization $\varphi(0)=0$, integration gives
\[
\varphi(a)
=
\frac{3(1+c)}{1+4c}
\left(
\frac{1}{\sqrt{1-a^2}}-1
\right).
\]
Plugging in this $\varphi$ gives the $V(a, \lambda)$ we utilize in our proof. The following lemma verifies that the candidate obtained from these equality conditions satisfies the required inequality for every feasible transition.

Define
\[
\mu(c)
:=
\frac{3}{8(1+c)}
\left(
3\sqrt{\frac{1+c}{1+4c}}-\sqrt{5+3c}
\right)
=
\frac{3(c^\dagger-c)(12c+\sqrt{97}+7)}
{8(1+c)(1+4c)\left(3\sqrt{\frac{1+c}{1+4c}}+\sqrt{5+3c}\right)}
>0.
\]
The final inequality holds because \(c<c^\dagger\). Recalling \(L\) from
\eqref{eq:invariant-constants}, write
\[
\mathcal K:=[-L,L]\times[0,\sqrt{5+3c}].
\]

\begin{lemma}
\label{lem:potential-properties}

Let \((a,\lambda)\in\mathcal K\) be the parameters of a recursive
Gaussian certificate at level $\ell$ and
\((a_{\mathrm{new}},\lambda_{\mathrm{new}})\) the parameters of the new certificate obtained from Lemma~\ref{lem:recursive-step} with loss parameter \(\tau\le\mu(c)/2\). Then:
\begin{enumerate}[(i)]
    \item $|a_{\mathrm{new}}|\le L.$

    \item $  V(a_{\mathrm{new}},\lambda_{\mathrm{new}})
    -
    V(a,\lambda)
    \ge
    \mu(c)-\tau.$
\end{enumerate}
\end{lemma}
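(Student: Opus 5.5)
Part (i) is a one-line Cauchy--Schwarz estimate. Using \(|y|\le\sqrt{c/(1+c)}\) from Lemma~\ref{lem:recursive-step}, we have
\[
|B(a,y)|\le \tfrac12|a|+\tfrac{\sqrt3}{2}\sqrt{\tfrac{c}{1+c}}\sqrt{1-a^2}.
\]
The right-hand side is the inner product of the unit vector \((|a|,\sqrt{1-a^2})\) with \((\tfrac12,\tfrac{\sqrt3}{2}\sqrt{c/(1+c)})\). Hence it is at most
\[
\sqrt{\tfrac14+\tfrac{3c}{4(1+c)}}=\tfrac12\sqrt{\tfrac{1+4c}{1+c}}=L .
\]
This holds for every \(a\in(-1,1)\), so the hypothesis \(|a|\le L\) is not even needed here.

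For (ii), I follow the computation of Section~\ref{subsec:properties} with \(\varphi(a)=\kappa(\frac{1}{\sqrt{1-a^2}}-1)\), where \(\kappa:=\frac{3(1+c)}{1+4c}\). The \(-\tau\) in \(\lambda_{\mathrm{new}}\) passes straight through to the potential. Multiplying by \(\sqrt{1-a_{\mathrm{new}}^2}\in(0,1]\) and applying Cauchy--Schwarz to the pair \((\lambda,\sqrt{5+3c-\lambda^2})\), which has norm \(\sqrt{5+3c}\) because \(0\le\lambda\le\sqrt{5+3c}\), gives
\[
\sqrt{1-a_{\mathrm{new}}^2}\,\bigl(V(a_{\mathrm{new}},\lambda_{\mathrm{new}})-V(a,\lambda)+\tau\bigr)\ge H_\varphi-\sqrt{(5+3c)W}.
\]
Since \(\sqrt{1-a_{\mathrm{new}}^2}\le 1\), it suffices to show \(H_\varphi-\sqrt{(5+3c)W}\ge\mu(c)\) whenever this quantity is nonnegative. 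I split this into two inequalities:
\begin{enumerate}[(a)]
\item the key inequality \(Q_\varphi\ge0\) together with \(H_\varphi\ge0\), that is, \(H_\varphi\ge 3\sqrt{\frac{1+c}{1+4c}}\sqrt W\);
\item a lower bound \(\sqrt W\ge \frac{3}{8(1+c)}\).
\end{enumerate}
Given both, we get \(H_\varphi-\sqrt{(5+3c)W}\ge\sqrt W\bigl(3\sqrt{\tfrac{1+c}{1+4c}}-\sqrt{5+3c}\bigr)\ge\mu(c)\). If (b) turns out to be false in some corner, I would instead keep an explicit slack in (a) and split into cases according to whether \(W\) is small.

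To prove (a), I parametrize \(a=\sin\theta\) with \(|\theta|\) bounded via \(|a|\le L\), and eliminate \(y\) through \(a_{\mathrm{new}}=B(a,y)\), so that
\[
y=\frac{2a_{\mathrm{new}}+a}{\sqrt3\sqrt{1-a^2}}.
\]
Then \(A\), \(W\) and \(H_\varphi\) become explicit functions of \((a,a_{\mathrm{new}})\), and the constraint \(y^2\le c/(1+c)\) becomes a region in the \((a,a_{\mathrm{new}})\)-plane. Section~\ref{subsec:properties} already shows that \(Q_\varphi\) vanishes to first order along the anti-diagonal \(a_{\mathrm{new}}=-a\). I therefore try to write
\[
Q_\varphi=(a_{\mathrm{new}}+a)^2\,R(a,a_{\mathrm{new}})
\]
after clearing the square roots \(\sqrt{1-a^2}\) and \(\sqrt{1-a_{\mathrm{new}}^2}\), and then show \(R\ge0\) on the compact feasible region. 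For this I would first use the monotonicity or convexity of \(\frac{1}{\sqrt{1-a^2}}\) on \([-L,L]\), and only if needed fall back on a two-variable polynomial check.

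The nonnegativity \(H_\varphi\ge0\) follows from \(\frac32\ge\sqrt{1-a_{\mathrm{new}}^2}\,\kappa\bigl(\varphi(a_{\mathrm{new}})-\varphi(a)\bigr)/\kappa\), which holds because \(|a|,|a_{\mathrm{new}}|\le L\) bounds the \(\varphi\)-differences.

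For (b), I bound the two summands of \(W\) separately. Note that \(\sqrt{1-a_{\mathrm{new}}^2}-A\) is a difference of a normalization and the projection coefficient. Its square plus \(\frac34(\frac{c}{1+c}-y^2)\) can be minimized in closed form along the boundary \(y^2=c/(1+c)\), giving a bound of order \(\frac{9}{64(1+c)^2}\); the constant \(\nu_0(c)\) from \eqref{eq:invariant-constants} looks designed for exactly this comparison.

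The main obstacle is (a), the global inequality \(Q_\varphi\ge0\). The stationarity argument in Section~\ref{subsec:properties} only guarantees that \(Q_\varphi\) vanishes to first order on the anti-diagonal, not that it is nonnegative everywhere. Showing that the quotient \(R\) is nonnegative on the whole region cut out by \(|a|\le L\) and \(y^2\le c/(1+c)\) is where the real computation lies.
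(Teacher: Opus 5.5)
Your reduction is the paper's own. The Cauchy--Schwarz step, the split into (a) $H_\varphi\ge 3\sqrt{(1+c)/(1+4c)}\,\sqrt W$ and (b) $\sqrt W\ge \frac{3}{8(1+c)}$, and the recombination into $\mu(c)$ are exactly how the paper argues. Your part (i) is a valid Cauchy--Schwarz variant of the paper's identity $A(a,y)^2+B(a,y)^2=\frac14+\frac34y^2$.

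The gap is that (a) and (b), which carry all the content of part (ii), are not proved. For (a) you only announce a factorization, but the missing step is a short exact computation. Set $p=\sqrt{1-a^2}$, $q=\sqrt{1-a_{\mathrm{new}}^2}$ and $y=(2a_{\mathrm{new}}+a)/(\sqrt3\,p)$. Then $A(a,y)=(1+2aa_{\mathrm{new}})/(2p)$, $H_\varphi=\frac32-\kappa+\kappa q/p$, and $W=L^2+1-2a_{\mathrm{new}}^2-(1+2aa_{\mathrm{new}})\,q/p$. Using $3\kappa L^2=\frac94$, one gets
\[
(1-a^2)\,Q_\varphi=\kappa\Bigl[\kappa\,(p-q)^2+6pq\bigl(1+aa_{\mathrm{new}}-pq\bigr)\Bigr]\ge 0,
\]
since $1+aa_{\mathrm{new}}>0$ and $(1+aa_{\mathrm{new}})^2-p^2q^2=(a+a_{\mathrm{new}})^2$. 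This is equivalent to the paper's factorization $Q_\varphi=\frac{9(1+c)}{(1-a^2)(1+4c)^2}\,D\,\bigl[2+8c+6caa_{\mathrm{new}}-(1+7c)D\bigr]$ with $D=1+aa_{\mathrm{new}}-pq\ge0$, whose bracket is at least $(1+c)(1-aa_{\mathrm{new}})$. In particular $Q_\varphi\ge0$ holds on all of $(-1,1)^2$, so the feasible region you were worried about plays no role. Your sign check $H_\varphi\ge0$ is right in substance, but it needs the explicit estimate $\kappa(1-q/p)\le\kappa\bigl(1-\sqrt{1-L^2}\bigr)\le\frac32$, which reduces to $1-2c\le\sqrt{3(1+c)}$.

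For (b), your plan to minimize $W$ along the boundary $y^2=c/(1+c)$ is unjustified as stated. The first summand $(q-A)^2$ also depends on $y$, so restricting to the boundary needs a monotonicity argument you have not supplied, and the claimed bound of order $\frac{9}{64(1+c)^2}$ is asserted rather than derived. The paper instead drops the second summand of $W$ and uses the exact identity behind your part (i):
\[
\sqrt W\ge q-A(a,y)=\frac{1-B(a,y)^2-A(a,y)^2}{q+A(a,y)}=\frac{\frac34\,(1-y^2)}{q+A(a,y)}\ge\frac{1-L^2}{2}=\frac{3}{8(1+c)}.
\]
Here $q^2=1-B(a,y)^2$ because $a_{\mathrm{new}}=B(a,y)$. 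The numerator is at least $\frac{3}{4(1+c)}=1-L^2$ because $y^2\le c/(1+c)$. The denominator lies in $(0,2]$ because $0<A(a,y)\le L<1$ and $q\le 1$. Positivity of $A(a,y)$ uses $|a|\le L$ and $c<\frac12$, and it is exactly where the first entry of $\nu_0(c)$ enters. With these two computations inserted your outline becomes the paper's proof; without them, part (ii) is not established.
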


\begin{proof}
For the first item, direct substitution gives
\begin{align*}
A(a,y)
&\ge
\frac12\sqrt{1-L^2}
-\frac{\sqrt3}{2}L\sqrt{\frac{c}{1+c}}
>0,
\\
A(a,y)^2+|a_{\mathrm{new}}|^2
&=
A(a,y)^2+B(a,y)^2\\
&=
\frac14+\frac34y^2
\le L^2,
\end{align*}
which implies \(|a_{\mathrm{new}}|\le L\). For the second item, substituting
\(
y=(2a_{\mathrm{new}}+a)/(\sqrt3\sqrt{1-a^2})
\)
into the definitions of \(H_\varphi\) and \(W\) gives
\begin{align*}
&H_\varphi(a,a_{\mathrm{new}})^2
-\frac{9(1+c)}{1+4c}W(a,a_{\mathrm{new}})\\
&=
\frac{9(1+c)}{(1-a^2)(1+4c)^2}
\left(
1+aa_{\mathrm{new}}
-\sqrt{(1-a^2)(1-a_{\mathrm{new}}^2)}
\right)\\
&\qquad\cdot
\Bigg[
2+8c+6caa_{\mathrm{new}}
-(1+7c)
\left(
1+aa_{\mathrm{new}}
-\sqrt{(1-a^2)(1-a_{\mathrm{new}}^2)}
\right)
\Bigg]\\
&\ge0.
\end{align*}
Indeed, the first parenthesis is nonnegative because
\(
(1+aa_{\mathrm{new}})^2
-(1-a^2)(1-a_{\mathrm{new}}^2)
 =
(a+a_{\mathrm{new}})^2
\),
and it is at most \(1+aa_{\mathrm{new}}\). Thus, the final bracket is
at least \((1+c)(1-aa_{\mathrm{new}})>0\). Moreover,
\begin{align*}
H_\varphi(a,a_{\mathrm{new}})
&=
\frac32-\frac{3(1+c)}{1+4c}
+\frac{3(1+c)}{1+4c}
\frac{\sqrt{1-a_{\mathrm{new}}^2}}{\sqrt{1-a^2}}\\
&\ge
\frac{3}{2(1+4c)}
\left(-1+2c+\sqrt{3(1+c)}\right)>0.
\end{align*}
Therefore
\(
H_\varphi(a,a_{\mathrm{new}})
\ge
3\sqrt{(1+c)/(1+4c)}\sqrt{W(a,a_{\mathrm{new}})}
\).
Applying this gives
\begin{align*}
&V(a_{\mathrm{new}},\lambda_{\mathrm{new}})
-
V(a,\lambda)\\
&=
\lambda_{\mathrm{new}}-\lambda
-
\frac{3(1+c)}{1+4c}
\left(
\frac1{\sqrt{1-a_{\mathrm{new}}^2}}
-
\frac1{\sqrt{1-a^2}}
\right)\\
&\ge
\frac1{\sqrt{1-a_{\mathrm{new}}^2}}
\Bigg[
\frac32-\frac{3(1+c)}{1+4c}
+
\frac{3(1+c)}{1+4c}
\frac{\sqrt{1-a_{\mathrm{new}}^2}}{\sqrt{1-a^2}}\\
&\hspace{2.5cm}
-
\left(
\sqrt{1-a_{\mathrm{new}}^2}-A(a,y)
\right)\lambda
-
\frac{\sqrt3}{2}
\sqrt{\frac{c}{1+c}-y^2}\sqrt{5+3c-\lambda^2}
\Bigg]
-\tau\\
&\ge
\frac1{\sqrt{1-a_{\mathrm{new}}^2}}
\Bigg[
\frac32-\frac{3(1+c)}{1+4c}
+
\frac{3(1+c)}{1+4c}
\frac{\sqrt{1-a_{\mathrm{new}}^2}}{\sqrt{1-a^2}}\\
&\hspace{2.5cm}
-
\sqrt{
(5+3c)\left[
\left(
\sqrt{1-a_{\mathrm{new}}^2}-A(a,y)
\right)^2
+\frac34\left(\frac{c}{1+c}-y^2\right)
\right]
}
\Bigg]
-\tau\\
&\ge
\frac{
3\sqrt{\frac{1+c}{1+4c}}-\sqrt{5+3c}
}{
\sqrt{1-a_{\mathrm{new}}^2}
}
\sqrt{
\left(
\sqrt{1-a_{\mathrm{new}}^2}-A(a,y)
\right)^2
+\frac34\left(\frac{c}{1+c}-y^2\right)
}
-\tau\\
&\ge
\frac{
3\sqrt{\frac{1+c}{1+4c}}-\sqrt{5+3c}
}{
\sqrt{1-a_{\mathrm{new}}^2}
}
\left|
\sqrt{1-a_{\mathrm{new}}^2}-A(a,y)
\right|
-\tau\\
&\ge
\frac{
3\sqrt{\frac{1+c}{1+4c}}-\sqrt{5+3c}
}{
\sqrt{1-a_{\mathrm{new}}^2}
}
\left(
\sqrt{1-a_{\mathrm{new}}^2}-A(a,y)
\right)
-\tau\\
&=
\frac{
3\sqrt{\frac{1+c}{1+4c}}-\sqrt{5+3c}
}{
\sqrt{1-a_{\mathrm{new}}^2}
}
\frac{
1-a_{\mathrm{new}}^2-A(a,y)^2
}{
\sqrt{1-a_{\mathrm{new}}^2}+A(a,y)
}
-\tau\\
&\ge
\frac{
3\sqrt{\frac{1+c}{1+4c}}-\sqrt{5+3c}
}{
\sqrt{1-a_{\mathrm{new}}^2}
}
\frac{1-L^2}{2}
-\tau\\
&=
\frac{\mu(c)}{\sqrt{1-a_{\mathrm{new}}^2}}
-\tau\\
&\ge
\mu(c)-\tau.
\end{align*}
\end{proof}

\subsection{Proof of Theorem \ref{thm:sparse-progress}}\label{subsec:proof-of-main}

We prove Theorem \ref{thm:sparse-progress} by repeatedly applying Lemma \ref{lem:recursive-step} and showing that the potential $V$ increases at every step. This rules out the bad subgraph $G'$ from Lemma \ref{lem:kms-core}, so repeated \kmsp trials yield a large independent set with high probability.

\begin{proof}[Proof of Theorem \ref{thm:sparse-progress}]
Let $\mu(c)$, $\mathcal{K}$, and $V$ be defined as in
Lemma~\ref{lem:potential-properties}, and let $\nu_0(c)$ be defined as in
\eqref{eq:invariant-constants}. Define
\[R_V:=\sup_{\mathcal K}V-\inf_{\mathcal K}V.\] Since \(\mathcal K\) is compact and \(V\) is continuous on
\(\mathcal K\), we have \(R_V<\infty\). Choose an integer
\(K>2R_V/\mu(c)+2\) and constants
\(0<\tau<\min\{\nu_0(c)/2,\mu(c)/8\}\) and
\(0<\delta<\mu(c)/8\).

Now, recall that the threshold \(t\) is chosen so that $\Phi^c(t)=n^{-1/(5+3c)}.$ Suppose, for the sake of contradiction, that alternative~\emph{(ii)} of Lemma~\ref{lem:kms-core} holds, and let \(G'=(V',E')\) be the resulting nonempty induced subgraph. Fix a root \(i\in V'\). Then $S^{(1)}:=N_{G'}(i)$ is a level-\(1\)
$\left(
-\frac12,
\sqrt3\,t,
\frac18
\right)$-recursive Gaussian certificate rooted at \(i\). 

Indeed, write $(a_1,\lambda_1):=\left(-\frac12,\sqrt3\right),$ so that the set \(S^{(1)}\) is an
\(\left(a_1,\lambda_1t-o(t),\exp(-o(t^2))\right)\)
recursive Gaussian certificate, and
\((a_1,\lambda_1)\in\mathcal K\). We now apply Lemma~\ref{lem:recursive-step} repeatedly. Suppose that at
level \(\ell\) we have a
$\left(
a_\ell,
\lambda_\ell t-o(t),
\exp(-o(t^2))
\right)$
recursive Gaussian certificate \(S^{(\ell)}\), where \((a_\ell,\lambda_\ell)\in\mathcal K\). Lemma~\ref{lem:recursive-step}
produces a level-\((\ell+1)\)
\(\left(a_{\ell+1},\lambda_{\ell+1}t-o(t),\exp(-o(t^2))\right)\)
recursive Gaussian certificate. We consider two cases according to the value of
\(\lambda_{\ell+1}\).

\paragraph{Case 1: \(\lambda_{\ell+1}>\sqrt{5+3c}+\delta\).}
For all sufficiently large \(n\), we have
\(\lambda_{\ell+1}t-o(t)\ge(\sqrt{5+3c}+\delta/2)t\).
The certificate contains at most \(n\) distinct unit vectors. Therefore,
using
\(\log n=\frac{5+3c}{2}t^2+o(t^2)\) and the Gaussian tail bound, we obtain
\begin{align*}
\Pr_\gamma\left[
\max_{j\in S^{(\ell+1)}}
\langle\gamma,x_j\rangle
\ge
\lambda_{\ell+1}t-o(t)
\right]
&\le
n\Phi^c\left(
\left(\sqrt{5+3c}+\frac\delta2\right)t
\right)\\
&=
\exp\left(-\Omega_{c,\delta}(t^2)\right).
\end{align*}
This contradicts the \(\exp(-o(t^2))\) cover lower bound of the
level-\((\ell+1)\) certificate. Hence Case~1 is impossible.

\paragraph{Case 2: \(\lambda_{\ell+1}\le\sqrt{5+3c}+\delta\).}
The recursive transition and definition of \(\nu_0(c)\) give
\begin{align*}
\lambda_{\ell+1}
&\ge
\frac{
\lambda_\ell A(a_\ell,y)+\frac32
-\frac{\sqrt3}{2}
\sqrt{
(5+3c-\lambda_\ell^2)
\left(\frac{c}{1+c}-y^2\right)
}
}{
\sqrt{1-a_{\ell+1}^2}
}
-\tau\\
&\ge
\frac{
\frac32-\frac{\sqrt3}{2}
\sqrt{\frac{c(5+3c)}{1+c}}
}{
\sqrt{1-a_{\ell+1}^2}
}
-\tau\\
&\ge
\nu_0(c)-\tau
>0.
\end{align*}
Now, define $\lambda'_{\ell+1}
:=
\min\left\{
\lambda_{\ell+1},
\sqrt{5+3c}
\right\}.$
Lowering the represented threshold preserves the cover property,
so \(S^{(\ell+1)}\) is also an
\(\left(a_{\ell+1},\lambda'_{\ell+1}t-o(t),\exp(-o(t^2))\right)\)
recursive Gaussian certificate. Part~\textnormal{(i)} of
Lemma~\ref{lem:potential-properties} gives $|a_{\ell+1}|\le L,$ and hence $(a_{\ell+1},\lambda'_{\ell+1})\in\mathcal K.$
Before lowering the threshold, part~\textnormal{(ii)} of
Lemma~\ref{lem:potential-properties} gives
\(V(a_{\ell+1},\lambda_{\ell+1})-V(a_\ell,\lambda_\ell)
\ge\mu(c)-\tau\). Since \(V\) is linear in its second argument with coefficient one,
\begin{align*}
V(a_{\ell+1},\lambda'_{\ell+1})
-
V(a_\ell,\lambda_\ell)
&=
V(a_{\ell+1},\lambda_{\ell+1})
-
V(a_\ell,\lambda_\ell)
-
\left(
\lambda_{\ell+1}-\lambda'_{\ell+1}
\right)\\
&\ge
\mu(c)-\tau
-
\left(
\lambda_{\ell+1}-\lambda'_{\ell+1}
\right)\\
&\ge \mu(c)-\tau-\delta > \frac{\mu(c)}2.
\end{align*}

Thus, unless Case~1 gives an immediate contradiction, we may replace the
level-\((\ell+1)\) certificate by the lowered certificate above and
continue the recursion from the state $(a_{\ell+1},\lambda'_{\ell+1})\in\mathcal K,$
whose potential exceeds that of the preceding state by more than
\(\mu(c)/2\). For the next iteration, relabel
\(\lambda'_{\ell+1}\) as \(\lambda_{\ell+1}\).

Iterating this argument, either Case~1 occurs at some step, giving an
immediate contradiction, or we obtain \(K\) successive states in
\(\mathcal K\), with each transition increasing the potential by more than
\(\mu(c)/2\). Since \(K\) is fixed, we may take \(n\) sufficiently large
that all asymptotic estimates used in the first \(K-1\) recursive steps
hold simultaneously. In the latter case,
\begin{align*}
V(a_K,\lambda_K)-V(a_1,\lambda_1)
&> \frac{(K-1)\mu(c)}2 > R_V,
\end{align*}
contradicting the definition of \(R_V\). Therefore
alternative~\emph{(ii)} of Lemma~\ref{lem:kms-core} is impossible.

It therefore follows that one KMS\(^{\prime}\) trial satisfies $\E|I|
\ge
\frac14 n^{(4+3c)/(5+3c)}.$ By Markov's inequality applied to \(n-|I|\),
\begin{align*}
\Pr\left[
|I|<
\frac18 n^{(4+3c)/(5+3c)}
\right]
&=
\Pr\left[
n-|I|>
n-\frac18 n^{(4+3c)/(5+3c)}
\right]\\
&\le
\frac{n-\E|I|}
{n-\frac18 n^{(4+3c)/(5+3c)}}.
\end{align*}
Therefore, $\Pr\left[
|I|\ge
\frac18 n^{(4+3c)/(5+3c)}
\right]
\ge
\frac18 n^{-1/(5+3c)}.$ Running $O \left(n^{1/(5+3c)}\log n\right)$ independent KMS\(^{\prime}\) trials and retaining the largest returned set gives the result in polynomial time with high probability.
\end{proof}

\section*{Acknowledgments}

\paragraph{Statement of AI Use.} The paper was motivated by the work of Bansal, Huang, and Lee \cite{BHL26}. The authors initially asked ChatGPT 5.6 Sol whether their analysis could be extended to third- and fourth-order neighborhoods. This led to an improved analysis of the work of \cite{ACC06} that relied heavily on numerical certificates. Through a subsequent series of interactions with ChatGPT 5.6 Sol, this approach was extended to a fully analytic proof that utilized arbitrary depth level-\(L\) neighborhoods. The authors then spent time understanding, verifying, and simplifying the proof. Codex assisted with typesetting.

\bibliographystyle{alpha}
\bibliography{ref}

\appendix

\section{Proof of Lemma~\ref{lem:recursive-step}}
\label{app:recursive-step}

\begin{proof}
For \(j\in S^{(\ell)}\), write \(a_j:=\langle v_i,v_j\rangle\) and
\(p_j:=\sqrt{1-a_j^2}\), and let \(x_j:=(v_j-a_jv_i)/p_j\) and
\(z_j:=(v_i-a_jv_j)/p_j\) be the normalized directions from the root to
\(j\) and from \(j\) to the root, respectively. By definition,
\[
    v_j=a_jv_i+p_jx_j, \qquad z_j=p_jv_i-a_jx_j.
\]
Because $S^{(\ell)}$ is an $\bigl(a,\lambda t-o(t),\exp(-o(t^2))\bigr)$-certificate, we have \(a_j=a+o(1)\) uniformly over \(j\). Furthermore, symmetry of covers implies that \(\{-x_j:j\in S^{(\ell)}\}\) is a \(\bigl(\lambda t-o(t),\exp(-o(t^2))\bigr)\)-cover.

Now, for an extension \(j\to k\), recall that its type is given by
\(y_{jk}:=\langle u_{k\mid j},z_j\rangle\) and its projected edge
direction by \(r_{jk}:=u_{k\mid j}-y_{jk}z_j\). Note that at level \(1\), we have
\(x_j=u_{j\mid i}\), \(z_j=u_{i\mid j}\), and
\(r_{jk}=u_{k\mid i,j}\), so these are exactly the quantities used in
Lemma \ref{lem:level-2-certificate}. 

Since \(u_{k\mid j}\perp v_j\), we have
\(r_{jk}\perp\operatorname{span}\{v_i,v_j\}\) and \(\|r_{jk}\|^2=1-y_{jk}^2\). Thus, applying \eqref{eq:edge-decomposition} to the edge \(\{j,k\}\) and substituting gives
\begin{align*}
\langle v_i,v_k\rangle
&=-\frac{a_j}{2}+\frac{\sqrt3}{2}p_jy_{jk}=B(a_j,y_{jk}),\\
v_k-\langle v_i,v_k\rangle v_i
&=-\left(\frac{p_j}{2}+\frac{\sqrt3}{2}a_jy_{jk}\right)x_j
+\frac{\sqrt3}{2}r_{jk}=-A(a_j,y_{jk})x_j+\frac{\sqrt3}{2}r_{jk}.
\end{align*}

We select a common type as in the proof of Lemma \ref{lem:level-2-certificate}.
Set \(\rho_t:=(\log t)^{-2}\), and construct a partition \(\mathcal J_t\) of \([-1,1]\) into at most \(3/\rho_t\) consecutive intervals of width at most \(\rho_t\). For \(I\in\mathcal J_t\),
let \(N_j(I):=\{k\in N_{G'}(j):y_{jk}\in I\}\). The condition (ii) in Lemma~\ref{lem:kms-core} and the same union-bound argument as in Lemma~\ref{lem:level-2-certificate} give an interval \(J_j\in\mathcal J_t\) with
\begin{align*}
\Pr_\gamma\!\left[
\max_{k\in N_j(J_j)}\langle\gamma,u_{k\mid j}\rangle\ge\sqrt3\,t
\right]&\ge\frac{\rho_t}{24}.
\end{align*}
Grouping the vertices \(j\in S^{(\ell)}\) by \(J_j\) and applying the union bound once
more gives an interval \(J\) for which
\(\widetilde S^{(\ell)}:=\{j\in S^{(\ell)}:J_j=J\}\) retains an outer
\(\bigl(\lambda t-o(t),\exp(-o(t^2))\bigr)\)-cover. 

Let \(y\) be the midpoint of \(J\), and set
\(S^{(\ell+1)}:=\bigcup_{j\in\widetilde S^{(\ell)}}N_j(J)\), with each
endpoint included only once. Uniformly over retained pairs,
\(y_{jk}=y+o(1)\), so the identities above yield
\begin{align*}
\langle v_i,v_k\rangle&=B(a,y)+o(1),\\
v_k-\langle v_i,v_k\rangle v_i
&=-\bigl(A(a,y)+o(1)\bigr)x_j+\frac{\sqrt3}{2}r_{jk}.
\end{align*}

We now verify the projected cover probability: since
\(2\Phi^c(\Phi^{-1}(1-\rho_t/96))=\rho_t/48\), Lemma~\ref{lem:projection},
applied with loss \(\Phi^{-1}(1-\rho_t/96)=o(t)\), shows that every retained family
\(\{r_{jk}:k\in N_j(J)\}\) is a
\(\bigl(\sqrt3\,t-o(t),\rho_t/48\bigr)\)-cover, uniformly over \(j\).
As in the proof of Lemma \ref{lem:level-2-certificate}, \(\|r_{jk}\|^2=1-y^2+o(1)\) and the degree
bound imply
\begin{align*}
\frac{\rho_t}{48}
&\le\Pr_\gamma\!\left[
\max_{k\in N_j(J)}\langle\gamma,r_{jk}\rangle\ge\sqrt3\,t-o(t)
\right]\\
&\le\Delta(G)\exp\left(-\frac{(\sqrt3\,t-o(t))^2}{2(1-y^2+o(1))}\right).
\end{align*}
We now verify the bound on \(y\). Taking logarithms, dividing by \(t^2\), and using
\(\log(1/\rho_t)=o(t^2)\) and
\(\log\Delta(G)\le\frac{3(1+c)}2t^2+o(t^2)\) gives
\(0\le\frac{3(1+c)}2-\frac{3}{2(1-y^2)}+o(1)\), and hence
\(y^2\le c/(1+c)+o(1)\). Projecting \(y\) onto
\([-\sqrt{c/(1+c)},\sqrt{c/(1+c)}]\) changes it by only \(o(1)\), so
we may assume that \(y\) lies in this interval without changing the
preceding estimates.

It remains to compose the retained outer and inner covers. Let
\[
X:=\{-x_j:j\in\widetilde S^{(\ell)}\},
\]
where repeated outer directions are included only once. For each
\(x\in X\), choose a representative \(j(x)\in\widetilde S^{(\ell)}\)
such that \(x=-x_{j(x)}\), and define
\[
Y_x:=\{r_{j(x)k}:k\in N_{j(x)}(J)\}.
\]
The endpoints generated by these representative families form a subset of
\(S^{(\ell+1)}\); retaining all endpoints can only increase the maximum in the final cover event.

We now apply Lemma~\ref{lem:orthogonal-cover-composition} with $E=5+3c$ and $U^2=1-y^2.$ To verify its hypotheses, note that
\(r_{jk}\perp\operatorname{span}\{v_i,v_j\}\), and hence
\(Y_x\subseteq x^\perp\). Furthermore, the family \(X\) is a
\(\bigl(\lambda t-o(t),\exp(-o(t^2))\bigr)\)-cover and uniformly
over \(x\in X\), the family \(Y_x\) is a
\(\bigl(\sqrt3\,t-o(t),\exp(-o(t^2))\bigr)\)-cover as \(\rho_t/48=\exp(-o(t^2))\). Lastly, the required cardinality and norm bounds follow from
\begin{align*}
\log|X|
&\le \log|\widetilde S^{(\ell)}|
 \le \frac{5+3c}{2}t^2+o(t^2),\\
\log|Y_x|
&\le \log\Delta(G)
 \le \frac{3(1+c)}{2}t^2+o(t^2),\\
\sup_{r\in Y_x}\|r\|^2
&\le 1-y^2+o(1),
\end{align*}
and the remaining parameter constraints follow from
\begin{align*}
E-\lambda^2
&=5+3c-\lambda^2\ge0,\\
U^2-\frac1{1+c}
&=\frac{c}{1+c}-y^2\ge0.
\end{align*}
Thus, all the hypotheses of Lemma~\ref{lem:orthogonal-cover-composition} are satisfied, so applying the lemma gives, with probability at least
\(\exp(-o(t^2))\), a representative \(j\) and an endpoint
\(k\in N_j(J)\) such that
\begin{align*}
\langle\gamma,-x_j\rangle
&\ge\lambda t-o(t),\\
\langle\gamma,r_{jk}\rangle
&\ge
\left(
\sqrt3-
\sqrt{
(5+3c-\lambda^2)
\left(\frac{c}{1+c}-y^2\right)
}
\right)t-o(t).
\end{align*}
We now verify the needed nondegeneracy bounds:
\begin{align*}
A(a,y)
&\ge
\frac12\sqrt{1-L^2}
-\frac{\sqrt3}{2}L\sqrt{\frac{c}{1+c}}
\ge\nu_0(c)>0,\\
|B(a,y)|^2
&\le\frac14+\frac34y^2
\le L^2<1.
\end{align*}
Therefore, writing
\[
x_k:=
\frac{v_k-\langle v_i,v_k\rangle v_i}
{\sqrt{1-\langle v_i,v_k\rangle^2}},
\]
the identities above give
\begin{align*}
\langle\gamma,x_k\rangle
&=
\frac{
\bigl(A(a,y)+o(1)\bigr)\langle\gamma,-x_j\rangle
+\frac{\sqrt3}{2}\langle\gamma,r_{jk}\rangle
}{
\sqrt{1-\langle v_i,v_k\rangle^2}
}\\
&\ge
\frac{
\bigl(A(a,y)+o(1)\bigr)\bigl(\lambda t-o(t)\bigr)
+\frac{\sqrt3}{2}
\left[
\left(
\sqrt3-
\sqrt{
(5+3c-\lambda^2)
\left(\frac{c}{1+c}-y^2\right)
}
\right)t-o(t)
\right]
}{
\sqrt{1-B(a,y)^2}+o(1)
}\\
&=
\frac{
\lambda A(a,y)+\frac32
-\frac{\sqrt3}{2}
\sqrt{
(5+3c-\lambda^2)
\left(\frac{c}{1+c}-y^2\right)
}
}{
\sqrt{1-B(a,y)^2}
}\,t-o(t)\\
&=
(\lambda_{\mathrm{new}}+\tau)t-o(t).
\end{align*}
Lowering the threshold by \(\tau t\), we conclude that
\[
\Pr_\gamma\!\left[
\max_{k\in S^{(\ell+1)}}
\langle\gamma,x_k\rangle
\ge\lambda_{\mathrm{new}}t-o(t)
\right]
\ge\exp(-o(t^2)).
\]
Together with
\(\langle v_i,v_k\rangle=B(a,y)+o(1)=a_{\mathrm{new}}+o(1)\),
this shows that \(S^{(\ell+1)}\) is the claimed recursive Gaussian
certificate.
\end{proof}

\section{Proof of Theorem \ref{thm:main}}
\label{subsec:dense-sparse-balance}

We combine Theorem \ref{thm:sparse-progress} with standard sparse--dense progress results from the literature~\cite{Blu94, KT12, KT17, BHL26}.

Blum's coloring framework~\cite{Blu94} states that an algorithm makes progress toward an
\(f(n)\)-coloring if it produces at least one of the following:
\begin{itemize}
\item (P1) An independent set of size \(\Omega(n/f(n))\)
\item (P2) A nonempty independent set \(S\) satisfying
      \(|N(S)|=O(f(n)|S|)\)
\item (P3) Two distinct vertices that receive the same color in every proper \(3\)-coloring.
\end{itemize}
Theorem~\ref{thm:sparse-progress} gives progress of type (P1). For the other two types of progress, we use the following two known results.

\begin{theorem}[Dense progress {\cite[Theorem~1]{KTY24}}]
\label{thm:dense-progress}
Let \(G\) be a \(3\)-colorable \(n\)-vertex graph of minimum degree
\(d>\sqrt n\). One can make progress in polynomial time toward an
\(n^{o(1)}\sqrt{n/d}\)-coloring.
\end{theorem}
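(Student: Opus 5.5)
This theorem is imported from the literature, and within this paper I would not reprove it. I would invoke \cite[Theorem~1]{KTY24} directly. The only work needed here is to check that its hypotheses and its notion of progress match Blum's types (P1)--(P3) as stated above. Concretely, the check is that KTY's minimum-degree hypothesis is \(\delta(G)=d>\sqrt n\), and that their output is one of (P1)--(P3) with \(f(n)=n^{o(1)}\sqrt{n/d}\).

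If I had to reconstruct the argument, I would follow the Blum / Kawarabayashi--Thorup dense-graph line rather than anything SDP-based.

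The first step is a cleaning phase. Repeatedly remove or handle vertices of atypically large degree, using a standard bucketing over degree scales that costs only an \(n^{o(1)}\) factor. This lets us assume every degree lies in \([d,n^{o(1)}d]\).

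The second step fixes a root \(v\) and uses that \(N(v)\) induces a bipartite graph in every 3-coloring. For each vertex \(w\) of the second neighborhood, count its neighbors in \(N(v)\).
\begin{itemize}
\item If some \(w\) has many neighbors on both sides of a forced bipartition structure, then \(w\) and \(v\) must receive the same color in every 3-coloring. This gives progress of type (P3).
\item Otherwise the second neighborhoods are sparse relative to \(N(v)\). Averaging over roots \(v\) then yields a set \(S\) that lies in a single color class of \(N(v)\) and satisfies \(|N(S)|\le n^{o(1)}\sqrt{n/d}\,|S|\). This is progress of type (P2), after extracting an independent subset via the 2-coloring of \(N(v)\).
\end{itemize}

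The main obstacle is the quantitative balance in the second step. A direct Blum-style count only gives a bound of order \(n/d\) (up to polylogs). Reaching \(\sqrt{n/d}\) requires KTY's refined iterative ``core'' argument, which repeatedly restricts to the common neighborhoods of a growing set of same-colored vertices. Reproducing that argument is well beyond the scope of this paper, which is why I would simply cite it.
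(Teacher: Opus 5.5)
Your proposal matches the paper: it imports \cite[Theorem~1]{KTY24} without proof. Its only use, in the proof of Theorem~\ref{thm:main} with $\delta(G)\ge n^{\beta(c)}$ and $\beta(c)=3(1+c)/(5+3c)>1/2$, needs exactly the hypothesis-and-progress-type check you describe. Your reconstruction sketch is unnecessary and should not be relied on as written: the second bullet claims the $\sqrt{n/d}$ bound from a plain averaging step, which your own last paragraph concedes only gives order $n/d$. The citation itself is the whole argument.
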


\begin{theorem}[Progress combination {\cite[Proposition~17]{KT17}}]
\label{thm:progress-combination}
Let \(\alpha,\beta\in(0,1)\). Suppose that one can make progress in
polynomial time toward a \(\widetilde O(n^\alpha)\)-coloring whenever either
\(\Delta(G)\le n^\beta\) or \(\delta(G)\ge n^\beta\).
Then one can find a \(\widetilde O(n^\alpha)\)-coloring in polynomial time.
\end{theorem}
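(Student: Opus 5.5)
The statement is Theorem~\ref{thm:progress-combination}, the progress-combination result of Kawarabayashi and Thorup. I would prove it by the standard recursive coloring scheme built on Blum's progress notion. The driver is a procedure that, as long as the current graph has more than $n^{\alpha}$ vertices, makes one progress step and applies it. (P1) removes an independent set of size $\Omega(n/f(n))$, which gets a fresh color. (P2) colors $S$ with a fresh color and charges $N(S)$. (P3) merges two vertices, which preserves $3$-colorability. Blum's accounting lemma then shows that $\widetilde O(f(n))$ colors suffice in total, where $f(n)=\widetilde O(n^\alpha)$. I take this accounting as the framework the paper cites in Appendix~\ref{subsec:dense-sparse-balance}. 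The real content is therefore to show that \emph{every} $3$-colorable graph admits progress toward $\widetilde O(n^\alpha)$, and not only graphs with $\Delta\le n^\beta$ or $\delta\ge n^\beta$.

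\emph{Step 1 (reduction to the two regimes).} Given $G$, let $H$ be the set of vertices of degree greater than $n^\beta$ and $L = V\setminus H$.
\begin{enumerate}[(a)]
\item If $|L|\ge n/2$, apply the sparse hypothesis to $G[L]$, which has maximum degree at most $n^\beta$ and at least $n/2$ vertices. Progress there transfers to $G$. For (P1), the set is independent in $G$ and has size $\Omega(|L|/f(|L|))=\Omega(n/f(n))$. For (P3), an identification in $G[L]$ is valid in $G$, since every $3$-coloring of $G$ restricts to $G[L]$.
\item Otherwise, $|H|> n/2$ and I look for a dense induced subgraph. I repeatedly delete from $G[H]$ vertices of degree below $n^\beta/2$ within the current set. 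If the remainder $D$ is nonempty, it has minimum degree at least $n^\beta/2$ in $G[D]$, and the dense hypothesis applies after absorbing constants into the $\widetilde O$.
\end{enumerate}

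\emph{Step 2 (the main obstacle).} The key issue is progress (P2). A set $S$ found inside an induced subgraph $G[D]$ satisfies $|N_{G[D]}(S)|=O(f|S|)$, but in $G$ its neighborhood may be much larger. I would handle this as Kawarabayashi--Thorup do: run the peeling and choose the subgraph carefully so that external degrees are controlled. For $L$, every vertex has total degree at most $n^\beta$. A (P2) set inside $G[L]$ therefore has $|N_G(S)|\le |S|n^\beta$, and I would need $n^\beta\lesssim f$; if that fails, I would instead convert to (P1) using $|S|$ large. Concretely, I would try to arrange that every progress found in a subproblem can be certified either as (P1) relative to $n$, or as (P2) with neighborhood bounded in $G$. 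If the neighborhood bound fails, I would re-run on $G$ with $S\cup N(S)$ removed and use a counting argument.

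\emph{Step 3 (when peeling empties $H$).} If the peeling of Step 1(b) empties $H$, then $G[H]$ is $n^\beta/2$-degenerate. I would then use the degeneracy ordering to move mass back to the sparse side, and repeat Step 1 with adjusted thresholds. Since each round either makes progress or shrinks the relevant set by a constant factor, only $O(\log n)$ rounds are needed. The polylogarithmic losses are absorbed into $\widetilde O(n^\alpha)$.
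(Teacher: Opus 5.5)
The paper does not prove this statement. It imports it as Proposition~17 of Kawarabayashi--Thorup, so your proposal has to stand on its own. The outer frame is right: show that every \(3\)-colorable graph admits progress toward \(f=\widetilde O(n^\alpha)\), then apply Blum's accounting. Your transfer of (P1) and (P3) from an induced subgraph to \(G\) is also fine. But Steps~2 and~3, which carry all the content, are lists of intentions rather than arguments, and they do not close.

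\begin{itemize}
\item A (P2) set \(S\) found in \(G[L]\) or \(G[D]\) is not a (P2) set in \(G\). Your bound \(|N_G(S)|\le n^\beta|S|\) is useless once \(\beta>\alpha\); in this paper \(n^\beta=n^{1-2\alpha}\). ``Convert to (P1) using \(|S|\) large'' has no basis, since nothing forces \(|S|\) to be large.
\item In Step~1(b), even (P1) on the core only gives an independent set of size \(\Omega(|D|/f)\), and \(D\) may be tiny.
\item The dense hypothesis is not hereditary, so you cannot simply rerun it after deleting vertices.
\item In Step~3, an \(n^\beta\)-degenerate graph can have huge maximum degree, so the sparse hypothesis does not apply to it.
\item ``\(O(\log n)\) shrinking rounds'' yields progress on several different subgraphs, not one progress step for \(G\).
\item The degree thresholds are hypotheses, so constants in them cannot be absorbed into the \(\widetilde O\). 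For example, \(\Delta(G[L])\le n^\beta\) does not give \(\Delta(G[L])\le |L|^\beta\).
\end{itemize}

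The missing ingredient is a peel-and-collect lemma, which your remark about rerunning with \(S\cup N(S)\) removed only gestures at. Maintain \(U\subseteq V\). Whenever the routine returns a nonempty independent \(S\subseteq U\) with \(|N_{G[U]}(S)|\le f|S|\), collect \(S\) and delete \(S\cup N_{G[U]}(S)\) from \(U\). Every later set lies in the surviving \(U\), which avoids all earlier deleted neighbourhoods. Hence the union \(I\) of the collected sets is independent in \(G\), with \(|I|\ge(\#\text{deleted})/(1+f)\). Once a constant fraction of an \(\Omega(n)\)-size set is deleted, \(I\) is a (P1) set for \(G\), and external degrees never enter.

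To finish, proceed as follows.
\begin{enumerate}
\item Peel the whole current graph at threshold \(n^\beta\). The core then satisfies \(\delta(G[D])>n^\beta\ge |D|^\beta\) with no constant loss.
\item Run the dense routine on the core. On (P2), delete \(S\cup N_{G[D]}(S)\), re-peel, and repeat until half the vertices are gone.
\item Either the collected sets give (P1), or the peeled set \(P^*\) has at least \(n/4\) vertices.
\item In peeling order each vertex of \(P^*\) has at most \(n^\beta\) later neighbours. So \(G[P^*]\) has average degree at most \(2n^\beta\), and at least \(7|P^*|/8\) of its vertices have degree at most \(16n^\beta\) in \(G[P^*]\).
\item Run the sparse routine with peel-and-collect on those vertices.
\end{enumerate}
The constant-factor slack in the sparse threshold is handled by running on a constant number of disjoint copies of the current subgraph. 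An independent set, a (P2) set (some copy's part satisfies the bound), or a forced pair (necessarily within one copy) each project back to a single copy.
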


It remains to match the parameters.

\begin{proof}[Proof of Theorem~\ref{thm:main}]
Fix \(0<c<c^\dagger\), and put \(\beta(c):=3(1+c)/(5+3c)\) and
\(\alpha(c):=1/(5+3c)\). By Theorem~\ref{thm:sparse-progress}, when
\(\Delta(G)\le n^{\beta(c)}\), we make progress of type (P1) toward an
\(n^{\alpha(c)}\)-coloring. When \(\delta(G)\ge n^{\beta(c)}\),
Theorem~\ref{thm:dense-progress} makes progress toward
\(n^{(1-\beta(c))/2+o(1)}=n^{\alpha(c)+o(1)}\) colors, since
\begin{align*}
\frac{1-\beta(c)}2
&=\frac{1-\frac{3(1+c)}{5+3c}}2
=\frac1{5+3c}=\alpha(c).
\end{align*}
Given \(\varepsilon>0\), choose \(c<c^\dagger\) and a fixed exponent
\(\theta\) such that \(\alpha(c)<\theta<(13-\sqrt{97})/18+\varepsilon\).
This is possible because, as \(c\uparrow c^\dagger\),
\(\alpha(c)\longrightarrow1/(5+3c^\dagger)=(13-\sqrt{97})/18\).
For all sufficiently large \(n\), the sparse and dense routines both make
progress toward an \(n^\theta\)-coloring. Applying
Theorem~\ref{thm:progress-combination} gives
\begin{align*}
\widetilde O(n^\theta)
&=O_\varepsilon\!\left(n^{(13-\sqrt{97})/18+\varepsilon}\right)
\end{align*}
colors, since the fixed gap between \(\theta\) and the final exponent
absorbs the polylogarithmic factor.
\end{proof}

\end{document}